\crefname{equation}{equation}{equations}
\crefname{figure}{figure}{figures}	
\crefname{table}{table}{tables}
\definecolor{winered}{RGB}{159,53,58}
\definecolor{RoyalBlue}{RGB}{159,53,58}
\newtheoremstyle{dotlessP}{}{}{\color{black}\itshape}{}{\color{black}\scshape}{}{ }{}
\theoremstyle{dotlessP}
\newtheorem{theorem}{Theorem}
\theoremstyle{dotlessP}
\theoremstyle{dotlessP}
\theoremstyle{dotlessP}
\theoremstyle{definition}
\theoremstyle{dotlessP}
\newtheorem{definition}{Definition}
\theoremstyle{dotlessP}
\theoremstyle{dotlessP}
\newtheorem{proposition}{Proposition}
\theoremstyle{dotlessP}
\theoremstyle{dotlessP}
\numberwithin{equation}{section}
\numberwithin{corollary}{section}
\date{\small \today}
\newcommand*{\myproofname}{}
\newenvironment{myproof1}[1][\textit{\textcolor{winered}{Proof}}]{\begin{proof}[#1]}{\end{proof}}
\begin{document}

\title{{\Large  Stabilizing Congestion in Decentralized Record-Keepers}\footnote{We would like to thank Joseph Bonneau, Tarun Chitra, and Georgios Konstantopoulos for helpful feedback and discussions.}} 
\author{Assimakis Kattis\footnote{Courant Institute for Mathematical Sciences, New York University, kattis@cs.nyu.edu} \and Fabian Trottner\footnote{Department of Economics, Princeton University, trottner@princeton.edu}}

\maketitle

\maketitle
\begin{abstract}
  We argue that recent developments in proof-of-work consensus mechanisms can be used in accordance with advancements in formal verification techniques to build a distributed payment protocol that addresses important economic drawbacks from cost efficiency, scalability and adaptablity common to current decentralized record-keeping systems.  We enable the protocol to autonomously adjust system throughput  according to a feasibly computable statistic - system difficulty. We then provide a formal economic analysis of a decentralized market place for record-keeping that is consistent with our protocol design and show that, when block rewards are zero, the system admits stable, self-regulating levels of transaction fees and wait-times across varying levels of demand. We also provide an analysis of the various technological requirements needed to instantiate such a system in a commercially viable setting, and identify relevant research directions.
\end{abstract}
\newpage

\section{Introduction}

Blockchain technology (\cite{Nakamoto2008}) offers a decentralized alternative to traditional
record-keeping systems. Rather than relying on known central parties,
transactions are verified publicly by a decentralized network of record-keepers
(``miners'') that perform computationally expensive tasks (``proofs'').
As anyone can choose to become a record-keeper in the system, blockchain
technology offers a potential solution to well-known limitations of
competition among centralized ledgers (\cite{AbadiBrunnermeier2018}).
From an economic perspective, competition is essential to disciplining
rent-extraction by record-keepers.

However, existing blockchain technologies suffer from important economic limitations. The underlying proof-of-work (PoW) consensus mechanism that ensures correctness/trust poses inefficient levels
of cost that grow linearly with the size of the system (\cite{AbadiBrunnermeier2018,Budish2018,Thum2018}). 
In terms of \emph{scalability}, binding technological capacity constraints
limit the ability of current systems to process new record entries
at levels that would make them suitable for widespread use. Further, unlike  standard decentralized economic markets, blockchains have the property that the level of supply (i.e. number of transactions verified per block) is not determined by profit maximizing miners, but, instead, is constant and set at the beginning of the protocol. In terms of \emph{adaptability}, such systems lack the ability to adjust throughput (i.e. the supply) to the current level of demand (\cite{HubermanLeshnoMoaellemi2019,BasuEasleyOharaMaureen2019}).
This inability to adapt the supply side of the system limits the capacity of prices (transaction fees) to efficiently regulate the market for record-keeping and leads to high levels of congestion, transaction fees and system costs as demand increases.

In this paper, we study the properties of record-keepers instantiated using succinct proof systems and show positive results in terms of both scalability and adaptability. In this model, the miners are responsible for producing proofs that a set of records (or transactions) provided to them by end-users is valid, and to add these to an append-only ledger while maintaining consensus. Such an approach generalizes the Bitcoin model, in that end-users can request a record of any truth claim, which could be the correct execution of a smart contract or indeed the verification of arbitrary computation. This is a framework in which the proof of block validity and the consensus algorithm can be combined into one process (\cite{cryptoeprint:2020:190}). This connection provides a link between miners' computational work in generating a proof (and thus a block) and overall system capacity, or the number of transactions that can be verified in a given block (also referred to as \textit{predicate size}). This is done while maintaining the same model and security properties as Nakamoto (\cite{Nakamoto2008}) consensus.

This setup has the key economic implication that profit incentives to increase computational cost can be linked to overall system throughput (i.e. generating a proof of a block verifying more transactions takes substantially longer). Since the computational work miners perform will be substantial at commercial throughput levels, any practical instantiation of such a system would also require the distribution of proof computation among many machines - similar to the `mining pool' model in Bitcoin. We outline the necessary engineering steps for embedding such a system into a protocol that distributes proof computation and in which proof generation also acts as PoW. The core economic implication of a system with both features - consensus and proof distribution - is that adjustments in predicate size and thus block capacity serve as a way of incentivizing miners to incur higher computational cost in exchange for higher transaction fee revenue.

Our work provides the following contributions: 
\begin{enumerate}
    \item  An integrated, empirically relevant equilibrium framework permitting an  analysis of the determinants of transaction fees and (entry) behavior of mining pools in a decentralized market for record-keeping under dynamic adjustments of system throughput according to arbitrary update rules.  
    
    \item A formal analysis of the dynamic equilibrium behavior induced by various update rules. Specifically, we derive an update rule that induces constant equilibrium transaction fees under arbitrary levels of demand (i.e. equilibrium elasticity of fees with respect to demand is zero due to offsetting throughput increases).
    
    \item An outline of the necessary and sufficient design conditions for such systems to be practically instantiatable.
\end{enumerate}

We model the behavior of miners and users in order to reason about the change in equilibrium transaction fees subject to demand shocks. We characterize the equilibrium behavior of both miners and system users and use our model to formally derive the optimal adjustment of throughput to changes in difficulty under which changes in demand lead to efficient changes in system cost. We prove that under the derived optimal rule, and when block rewards are zero, changes in demand lead to balanced growth of throughput and mining revenues, while transaction fees and wait times remain stable.

 Our analysis yields two conditions under which stable user fees and wait times across changing levels of demand can be sustained. First, block rewards need to equal zero. Only if block rewards are equal to zero, changes in the equilibrium entry and computational behavior of miners and thus difficulty can serve as a sufficient statistic for changes in demand. Although important at the early stages of network growth, positive block rewards act as a subsidy in this context and decrease market efficiency. Second, predicate size updates need to be gradual enough so that sudden large changes in demand do not destabilize the system. Up to some maximal tolerated change in demand, an update rule which takes into account changes in difficulty to alter predicate size proportionately to the shift in demand would admit stability in its average transaction fees and wait times. Note that this admits a trade-off between the magnitude of the change in demand that can be tolerated and the time it takes for the system to converge to equilibrium.
 
 We then analyze the conceptual and quantitative implications of our theoretical analysis for the practical instantiation of a scalable, self-regulating record-keeping system. We first look at related work on distributed proof generation (\cite{wu2018dizk}) and compare the functional equivalence of our model's assumptions to the observed data. This provides design specifications and proof system properties that can be used to construct payment systems with stability guarantees. We then detail the various areas of engineering relevance required in explicitly instantiating such a protocol. We outline the types of properties that the underlying system should possess, look at the relevant technical literature, and provide directions for future work that will need to be done along with sufficient conditions for its implementation.
 
 Jointly, our results provide theoretically grounded guidance toward the practical construction of a self-regulating, cost-efficient and decentralized payment system with permissive entry structure and high security guarantees. Our design outlines how current technical advances in the computer science literature can address significant economic limitations of current implementations.  We show how economic analysis can help understand how technical features of the design can be leveraged so as to align the incentives of active record-keepers with the requirements for cost-efficiency.  Further, our economic analysis provides concrete, measurable performance targets, which we believe will be useful in guiding future efforts to build distributed payment systems that scale in an efficient, decentralized manner to the demands of end-users.
 
  \section{Preliminaries}\label{sec:design}

 \subsection{Proof of Work}
Many blockchain protocols ensure security by requiring miners to publicly verify energy consumption (known as proof-of-work or PoW) in order to gain the right to write transactions on the blockchain. PoW is required in order to prevent denial-of-service attacks on the system: the potential for malicious entities
to flood the system with cheaply produced `valid' blocks can rapidly overwhelm honest
participants and hamper the network's ability to process valid transactions. PoW was first
introduced as a solution to the related problem of e-mail spam filtering (\cite{dwork1992pricing}), and provides the security guarantees against such attacks in Bitcoin (\cite{back2002hashcash}). By imposing a high energy cost for block creation, PoW makes it prohibitively expensive to produce many valid blocks and overwhelm the system, as doing so carries a non-trivial economic burden.

In the subsequent sections, we model this security constraint as the requirement that as demand $\lambda$ increases, some sufficient statistic (which in the case of Bitcoin is difficulty $d$) also causes a proportional increase in miner costs.

\subsection{Difficulty}

Since the network is
decentralized, the addition of a new block to the blockchain takes some time to propagate
among all parties. This is due to the latency inherent in distributed networks, and imposes
a trade-off between the size of the block propagated and the time taken for the rest of the
network to receive the updated block. This limitation implies that the frequency with which
the system accepts new blocks (`block frequency') needs to be sufficiently low so that each
new block has enough time to reach the whole system. This is necessary to prevent forks (a `split' in the chain), and in practice results in a trade-off between block size and frequency.
Note that this also implies an upper bound on improving throughput through modifications to block
size and/or frequency.
  
Since the total number of miners is an endogenous variable that varies over time, the protocol
needs a way to ensure that a PoW solution is provided at a fixed frequency (on average)
over \textit{all} participants in the network. This is achieved by requiring the PoW puzzle to satisfy
a `difficulty' requirement, which would mean that the expected time taken to generate a
solution can be calibrated according to a difficulty parameter that can be changed over time.
Through the use of time-stamps, the Bitcoin PoW algorithm keeps track of the average block
frequency in the last $2016$ blocks and alters difficulty up or down accordingly. By performing
this adjustment proportionately to the deviation of the block frequency from a hard-coded
value (10 minutes/block in Bitcoin), the system `self-corrects' the difficulty of the PoW
puzzle to ensure constant block frequency. This mechanism is crucial to ensuring that
blocks in Bitcoin are found every 10 minutes, so that the previous blocks had enough time
to propagate in the network and the threat of forks is minimized. In the subsequent analysis, we model the block frequency as a random variable $\mu$ (the block arrival rate) with a constant expectation (the `block time') mirroring the above technological constraints.

\subsection{Difficulty Adjustment}

The underlying system consists of a set of users and miners, who each are responsible for requesting and providing proofs from and to the system respectively. Everyone has access to the underlying state of the system, which consists of a sequence of valid transactions that have been processed into the system (the blockchain) along with their respective proofs of correctness: proofs that each block added to the chain is valid.

Users submit transactions (or `records') to miners along with an associated transaction fee. The miners in turn are responsible for adding these transactions to some block which is then appended to the blockchain. This happens if the miner succeeds in generating a PoW solution that also acts as the proof of validity of all transactions in the block. We denote by $S$ the total number of transactions verified in the given block. In order for a miner to win the right to add the next block (and hence collect the corresponding transaction fees and block reward), they need to win the PoW game, which is a puzzle yielding a valid block in time proportional to the underlying difficulty parameter. This difficulty parameter functions equivalently to that in Bitcoin: through periodic updates, the difficulty parameter is adaptively changed to ensure constant block frequency. In this model, only miners are required to keep the whole blockchain (or some kind of `state' data structure) in order to efficiently generate new blocks and proofs. End-users should be able to efficiently check provided proofs in order to find the correct chain with minimal bandwidth requirements.

\section{Economic Model}

 \subsection{Overview}
   We model a decentralized market for throughput where transactions are verified in batches/blocks that arrive at a random, exogenous  rate $\mu$ in each unit of time. The chart displayed in \figref{flow_chart} represents the key forces in our model. 
   
   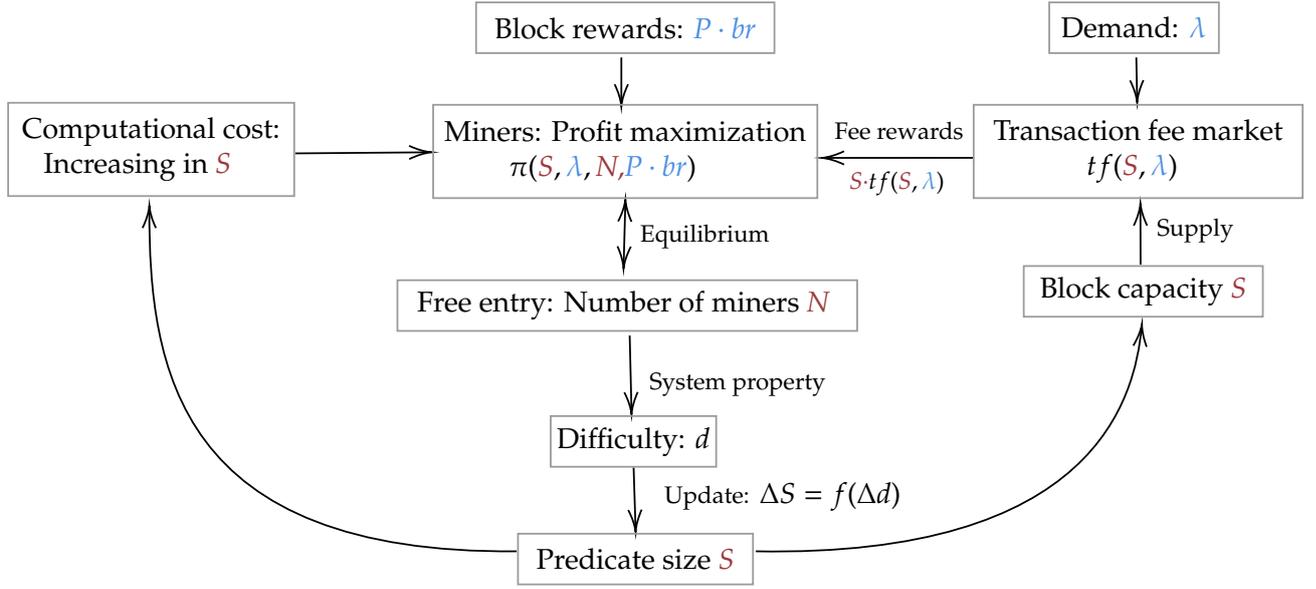
\begin{figure}[h]   
     \caption{A Model  of the Decentralized Market for Throughput}\label{fig:flow_chart}
     \scalebox{0.9}{\tikzset{every picture/.style={line width=0.75pt}} 

\begin{tikzpicture}[x=0.85pt,y=0.85pt,yscale=-1,xscale=1]

\draw    (380.85,283.18) .. controls (547.31,286.3) and (566.3,207.59) .. (569.7,172.74) ;
\draw [shift={(569.85,171.18)}, rotate = 455.04] [color={rgb, 255:red, 0; green, 0; blue, 0 }  ][line width=0.75]    (10.93,-3.29) .. controls (6.95,-1.4) and (3.31,-0.3) .. (0,0) .. controls (3.31,0.3) and (6.95,1.4) .. (10.93,3.29)   ;
\draw    (415.18,89.18) -- (487.09,89.15) ;
\draw [shift={(413.18,89.18)}, rotate = 359.98] [color={rgb, 255:red, 0; green, 0; blue, 0 }  ][line width=0.75]    (10.93,-3.29) .. controls (6.95,-1.4) and (3.31,-0.3) .. (0,0) .. controls (3.31,0.3) and (6.95,1.4) .. (10.93,3.29)   ;
\draw    (84.09,115.17) .. controls (84.04,192.84) and (97.53,284.72) .. (263.85,283.18) ;
\draw [shift={(84.09,112.82)}, rotate = 90.23] [color={rgb, 255:red, 0; green, 0; blue, 0 }  ][line width=0.75]    (10.93,-3.29) .. controls (6.95,-1.4) and (3.31,-0.3) .. (0,0) .. controls (3.31,0.3) and (6.95,1.4) .. (10.93,3.29)   ;
\draw    (155.51,86.95) -- (220.18,86.42) ;
\draw [shift={(222.18,86.4)}, rotate = 539.52] [color={rgb, 255:red, 0; green, 0; blue, 0 }  ][line width=0.75]    (10.93,-3.29) .. controls (6.95,-1.4) and (3.31,-0.3) .. (0,0) .. controls (3.31,0.3) and (6.95,1.4) .. (10.93,3.29)   ;
\draw    (315.09,39.65) -- (315.09,61.82) ;
\draw [shift={(315.09,63.82)}, rotate = 270] [color={rgb, 255:red, 0; green, 0; blue, 0 }  ][line width=0.75]    (10.93,-3.29) .. controls (6.95,-1.4) and (3.31,-0.3) .. (0,0) .. controls (3.31,0.3) and (6.95,1.4) .. (10.93,3.29)   ;
\draw    (317.04,111.65) -- (316.22,141.76) ;
\draw [shift={(316.17,143.76)}, rotate = 271.56] [color={rgb, 255:red, 0; green, 0; blue, 0 }  ][line width=0.75]    (10.93,-3.29) .. controls (6.95,-1.4) and (3.31,-0.3) .. (0,0) .. controls (3.31,0.3) and (6.95,1.4) .. (10.93,3.29)   ;
\draw [shift={(317.09,109.65)}, rotate = 91.56] [color={rgb, 255:red, 0; green, 0; blue, 0 }  ][line width=0.75]    (10.93,-3.29) .. controls (6.95,-1.4) and (3.31,-0.3) .. (0,0) .. controls (3.31,0.3) and (6.95,1.4) .. (10.93,3.29)   ;
\draw    (567.09,39.65) -- (567.48,59.84) ;
\draw [shift={(567.51,61.84)}, rotate = 268.92] [color={rgb, 255:red, 0; green, 0; blue, 0 }  ][line width=0.75]    (10.93,-3.29) .. controls (6.95,-1.4) and (3.31,-0.3) .. (0,0) .. controls (3.31,0.3) and (6.95,1.4) .. (10.93,3.29)   ;
\draw    (569.1,113.65) -- (569.18,141.84) ;
\draw [shift={(569.09,111.65)}, rotate = 89.84] [color={rgb, 255:red, 0; green, 0; blue, 0 }  ][line width=0.75]    (10.93,-3.29) .. controls (6.95,-1.4) and (3.31,-0.3) .. (0,0) .. controls (3.31,0.3) and (6.95,1.4) .. (10.93,3.29)   ;
\draw    (319.17,176.76) -- (320.05,213.65) ;
\draw [shift={(320.09,215.65)}, rotate = 268.63] [color={rgb, 255:red, 0; green, 0; blue, 0 }  ][line width=0.75]    (10.93,-3.29) .. controls (6.95,-1.4) and (3.31,-0.3) .. (0,0) .. controls (3.31,0.3) and (6.95,1.4) .. (10.93,3.29)   ;
\draw    (321.09,241.82) -- (322.03,271.65) ;
\draw [shift={(322.09,273.65)}, rotate = 268.2] [color={rgb, 255:red, 0; green, 0; blue, 0 }  ][line width=0.75]    (10.93,-3.29) .. controls (6.95,-1.4) and (3.31,-0.3) .. (0,0) .. controls (3.31,0.3) and (6.95,1.4) .. (10.93,3.29)   ;

\draw  [color={rgb, 255:red, 155; green, 155; blue, 155 }  ,draw opacity=1 ]  (244,12.5) -- (390,12.5) -- (390,37.5) -- (244,37.5) -- cycle  ;
\draw (317,25) node   [align=left] {Block rewards: \textcolor[rgb]{0.74,0.06,0.88}{$\displaystyle \textcolor[rgb]{0.29,0.56,0.89}{P\cdot br}$}};
\draw  [color={rgb, 255:red, 155; green, 155; blue, 155 }  ,draw opacity=1 ][line width=0.75]   (524.5,12.5) -- (607.5,12.5) -- (607.5,37.5) -- (524.5,37.5) -- cycle  ;
\draw (566,25) node   [align=left] {Demand: \textcolor[rgb]{0.74,0.06,0.88}{$\displaystyle \textcolor[rgb]{0.29,0.56,0.89}{\lambda }$}};
\draw  [color={rgb, 255:red, 155; green, 155; blue, 155 }  ,draw opacity=1 ]  (487.5,63) -- (648.5,63) -- (648.5,109) -- (487.5,109) -- cycle  ;
\draw (568,86) node   [align=left] {Transaction fee market\\ \ \ \ \ \ \ \ \ \ \ \ \ \ $\displaystyle \textcolor[rgb]{0,0,0}{tf}\textcolor[rgb]{0,0,0}{(}\textcolor[rgb]{0.62,0.21,0.23}{S} ,\textcolor[rgb]{0.29,0.56,0.89}{\lambda })$};
\draw  [color={rgb, 255:red, 155; green, 155; blue, 155 }  ,draw opacity=1 ]  (223,63) -- (411,63) -- (411,109) -- (223,109) -- cycle  ;
\draw (317,86) node   [align=left] {Miners: Profit maximization\\ \ \ \ \ \ \ \ \ \ $\displaystyle \pi (\textcolor[rgb]{0.62,0.21,0.23}{S} ,\textcolor[rgb]{0.29,0.56,0.89}{\lambda } ,\textcolor[rgb]{0.62,0.21,0.23}{N,}\textcolor[rgb]{0.29,0.56,0.89}{P\cdot br})$};
\draw  [color={rgb, 255:red, 155; green, 155; blue, 155 }  ,draw opacity=1 ]  (205.5,149.5) -- (430.5,149.5) -- (430.5,174.5) -- (205.5,174.5) -- cycle  ;
\draw (318,162) node   [align=left] {Free entry: Number of miners $\displaystyle \textcolor[rgb]{0.6,0.23,0.23}{N}$ };
\draw  [color={rgb, 255:red, 155; green, 155; blue, 155 }  ,draw opacity=1 ]  (280.5,216.5) -- (361.5,216.5) -- (361.5,241.5) -- (280.5,241.5) -- cycle  ;
\draw (321,229) node   [align=left] {Difficulty: $\displaystyle d$};
\draw  [color={rgb, 255:red, 155; green, 155; blue, 155 }  ,draw opacity=1 ]  (264.5,274.5) -- (379.5,274.5) -- (379.5,299.5) -- (264.5,299.5) -- cycle  ;
\draw (322,287) node   [align=left] {Predicate size $\displaystyle \textcolor[rgb]{0.62,0.21,0.23}{S}$};
\draw  [color={rgb, 255:red, 155; green, 155; blue, 155 }  ,draw opacity=1 ]  (15,62) -- (155,62) -- (155,108) -- (15,108) -- cycle  ;
\draw (85,85) node   [align=left] {Computational cost:\\ \ \ \ Increasing in $\displaystyle \textcolor[rgb]{0.62,0.21,0.23}{S}$};
\draw  [color={rgb, 255:red, 155; green, 155; blue, 155 }  ,draw opacity=1 ]  (512.01,142.5) -- (629.01,142.5) -- (629.01,167.5) -- (512.01,167.5) -- cycle  ;
\draw (570.51,155) node   [align=left] {Block capacity $\displaystyle \textcolor[rgb]{0.62,0.21,0.23}{S}$};
\draw (451,75.64) node   [align=left] {{\footnotesize Fee rewards}};
\draw (447,101.64) node   [align=left] {{\footnotesize $\displaystyle \textcolor[rgb]{0.62,0.21,0.23}{\ \ S\cdot } tf(\textcolor[rgb]{0.62,0.21,0.23}{S} ,\textcolor[rgb]{0.29,0.56,0.89}{\lambda })$}};
\draw (356,128) node   [align=left] {{\footnotesize Equilibrium}};
\draw (596,125) node   [align=left] {{\footnotesize Supply}};
\draw (372,201) node   [align=left] {{\footnotesize System property}};
\draw (394,256) node   [align=left] {{\footnotesize Update:} $\displaystyle \Delta S=f( \Delta d)$};

\end{tikzpicture}}
     
   \end{figure}
   Miners compete for the right to verify an incoming set of transactions in exchange for transaction fees and block rewards. They  choose the profit-maximizing number of workers among which to distribute proofs to simultaneously verify block correctness and  generate a PoW solution. In equilibrium all miners include the transactions offering the highest fees, up to block capacity $S$.\footnote{In practice, miners may choose to include different transactions in each block. As we assume that miners observe the same pending transactions at any given point in time and miners have incentives to include transactions with higher fees, miners all aim to verify the same block of transactions.} The probability that  a miner gains the right to write the next block is proportional to the number of computations that she performs. The profitability of miners thus depends on the total number $N$ of active miners. Free entry requires that all miners make zero profits in expectation, which pins down the equilibrium number of  miners $N$.
   
    Predicate size $S$ corresponds to the number of transactions that can be verified in each new block. Users arrive at the pool of pending transactions at rate $\lambda$ and offer fees for transaction verification. The determination of transaction fees results from a Vickrey-Clarke-Groves auction, where users optimally weigh the cost of higher fees against a reduction in their expected wait time. In equilibrium, average transaction fees $tf$  depend on both block capacity (and thus predicate size $S$), and on the level of demand $\lambda$.  
   
   Unlike  standard economic markets, the level of supply - system throughput/block capacity $S$ - cannot be directly chosen by  miners, but is instead set by the  protocol. The inability of miners to choose both their computational cost \textit{and} throughput poses an economic market failure that leads to inefficient adjustments of system cost to changes in demand. To address this limitation, we model a system that implements periodic updates to predicate size in accordance with an ex-ante specified update rule. The purpose of this updating rule is to grow throughput capacity in response to changes in demand.  Our framework highlights that equilibrium system difficulty -- the total proof computations performed by all miners -- responds to changes in demand, and, therefore, serves as a natural candidate statistic to design such an updating rule.
   
   Our framework allows us to analyze how key outcomes -- transaction fees, computational cost of the system, entry of miners -- respond dynamically to changes in demand.  A change in demand $\lambda$ induces changes in transaction fees offered by users, which changes difficulty by affecting the equilibrium number and size of active mining pools. The subsequent update in predicate size  changes block capacity and therefore transaction fees, inducing in turn a change in mining profitability,  the equilibrium number of miners, and, thus, difficulty.  A steady state is reached when difficulty and thus predicate size stabilize after the initial change in demand.
   
   In this section, we focus on characterizing the problem of economic agents, and on defining a decentralized equilibrium. In the next section, we apply the framework to investigate the dynamic behavior of the system under various updating rules for predicate size.  

  \subsection{Miners} 
  
  \subsubsection{The relation between hash rate, predicate size, and pool size}
  Miners compete for the chance to mine the next arriving block and collect the associated mining rewards. To do so, miners distribute proof computation corresponding to predicate size $S$ to pools and provide PoW in terms of proofs per second $H$ to the system. The relationship between  proofs per second $H$ for a miner employing $n$ workers to solve proofs corresponding to a predicate size $S$ is given by:
  \begin{equation}\label{eq:production}
  \log H(n;S) = \log U + \sigma \log n - \log S
  \end{equation}
  where $U$ is a technological constant.  The parameter $\sigma$ governs  the returns to pool size $n$  - the rate at which the  hash rate increases as miners increase the size of their pools.  This can be thought of as a parallelization coefficient, for which throughout the analysis satisfies $\sigma <1$.\footnote{State-of-the-art distributive systems of SNARK proofs achieve $\sigma\approx 0.9$ (\cite{wu2018dizk}).} Note that a value of $\sigma = 1$ would imply perfect conservation of work in a parallel system of $n$ workers, and as such no higher value of $\sigma$ is empirically achievable.
  
  As evident in \eqref{production}, an increase in predicate size $S$ reduces a pool's hash rate  for any given pool size.  An increase in $S$ thus poses a cost to the system as pools have to distribute proofs across more workers to maintain the same hash rate.

  \subsubsection{The Problem of Miners}
  Miners incur a cost $c_m\cdot n$ per time unit to distribute proofs across $n$ workers.\footnote{In practice, miners distribute and assemble proofs while workers solve small computational problems. The cost $c_m$ summarizes both margins.} Further, (in each period) miners pay a fixed cost $f_m$ to open and maintain a mining pool. Miners  compete for the chance to mine the next arriving block. In equilibrium, all miners include  transactions offering the highest rewards up to block capacity $S$. Benefits of mining a block arise from total transaction fee revenue $Rev$ and block rewards $br$. Transaction fees are denominated in USD, as users propose the transaction fees. We specify $Rev$ further below after having solved the problem of users. As  block rewards are exogenously set by the system, their value to miners is impacted by the exchange rate of the currency to USD, which we denote by $P$. Total rewards from mining are given by $Rev+P\cdot br$. 
  
  As in PoW, the probability that a given pool is selected to mine the next block depends on its hashrate relative to the aggregate hashrate of the system.\footnote{\cite{axiomaticPOW2019} and \cite{LeshnoStrack2019} show that  proportional selection is the only allocation rule that satisfies a set of desirable properties, e.g. anonymity, collusion-resistance, and sybil-resistance.} We denote  individual pools by $i$ and the aggregate hashrate of the system by $\mathcal{H}=\sum_{i\in\{\text{Active Pools}\}} H_i$. 
  
  Taking predicate size $S$ and the aggregate hashrate $\mathcal{H}$ as given, a miner $i$ chooses pool size $n_i$ so as to  maximize expected profits $\pi$ given by:
  \begin{equation}\label{eq:profits}
  \pi\left(n_i;S,\mathcal{H}\right)= \frac{H\left(n_i,S\right)}{\mathcal{H}}\left(Rev+P\cdot br\right)-c_m\cdot n_i-f_m,
  \end{equation}
  where $H\left(n,S\right)$ is defined in \eqref{production}. 
  
 A Nash equilibrium requires that equilibrium pool sizes are pinned down by miners' best-responses:
 \begin{equation}
 n_i^*=\arg\max_n \pi\left(n;S,\mathcal{H}\right)
 \end{equation}
 The solution to this problem is given by:\footnote{When setting $\frac{\partial\pi}{\partial n_i}=0$, we assume that miners are small in the sense that they don't internalize the effect their computations have on aggregate hash-rate $\mathcal{H}$. Our qualitative results do not crucially depend on this assumption.}
 
 \begin{equation}\label{eq:optimal_pool_size}
 n^*_i = \left(\frac{\sigma\left(Rev+P\cdot br\right)}{c_m(S/U)\mathcal{H}}\right)^{1/(1-\sigma)}.
 \end{equation}

  \subsubsection{Entry and the Equilibrium Number of Miners}
  Everybody is free to enter the system as a miner. In equilibrium,  the number of miners $N$ adjusts to ensure that upon entry, mining pools make zero profits in expectation. 
  
 Since mining pools are symmetric,  in equilibrium $n_i^*\equiv n^*$ and $\mathcal{H}=NH\left(n^*\right)$. Using \eqref{optimal_pool_size}, the optimal pool size can be shown to equal: 
  \begin{equation}\label{eq:optimal_n}
  n^*=\frac{\sigma}{c_m}\cdot\frac{Rev+P\cdot br}{N}
  \end{equation}
  The equilibrium probability that any miner is chosen to mine the next block equals $\frac{H_i}{\mathcal{H}}=\frac{H}{NH}=1/N$.  Substituting \eqref{optimal_n} into profits given by \eqref{profits} and imposing zero expected profits upon entry, the equilibrium number of miners is given by: 
  \begin{equation}\label{eq:free_entry}
  N = \frac{(1-\sigma)}{f_e}\cdot\left(Rev+P\cdot br\right)
  \end{equation}
  The equilibrium number of miners is increasing in mining revenues and decreasing in the entry cost.  By virtue of our implementation, miners that solve predicates of size $S$ and get chosen to write the next block can verify up to $S$ transactions.  Since there is entry cost and miners cannot force other miners into a coalition, miners have no incentive to form coalitions with the  intent to temper with transaction fees in order to increase profits.\footnote{For example, a large coalition may choose to not include transaction fees below a certain threshold.} Therefore, miners include transactions that offer the highest transaction fees up to the capacity limit.
  
 \subsection{Demand for Transactions and Determination of Fees}
 
 We can now describe the problem of users, which yields the equilibrium level of transaction fees. The  demand side of the model closely follows \cite{HubermanLeshnoMoaellemi2019}. Our model is distinct in one key dimension. The adaptive nature of our protocol implementation implies that block capacity $S$ and thus equilibrium system throughput is not a primitive, but rather an equilibrium outcome.
 
 Users have single transactions and are heterogeneous in the  cost that verification delay poses for them. New users enter the pool of pending transactions at a constant random rate per time unit and offer transaction fees for service. Knowing that a higher posted fee improves the chances of being included in a block sooner, users willingness to pay arises out of the opportunity cost of delay.

 \subsubsection{Users}
 Users arrive at Poisson rate $\lambda$ at each point in time. $\lambda$ parametrizes demand intensity or the rate at which the pool of pending transactions grows over time.  An arriving user is identified by a wait cost $c$, which is drawn from a continuous cumulative density function  $F(c)$ with domain $C\equiv\left[0,\bar{c}\right]\subseteq\mathbb{R}_0^+$ and probability density function $f(c)$. The expected benefit of using the system to a user endowed with wait cost $c$ and  offering  transaction fee $tf$ is given by:
 
 $$U\left(tf;W,c\right)=\nu - tf - c W\left(tf;G\right),$$ 
 
 where $\nu>0$ governs the value of a verified transaction to a user\footnote{In practice, users have an outside option and only participate if their utility from doing so exceeds this outside option. Throughout the analysis, we  assume that $\nu$ is sufficiently high so that users are willing to participate.} and $W\left(tf,G\right)$ is the expected wait time for a user that posts  fee $tf$. The Wait-time $W$ depends on the (endogenous) distribution of  transaction fees posted by all users $G(tf)$  in the system and is thus itself an equilibrium object.\footnote{The assumption is that users know the distribution of wait times $F$ and are assumed to correctly anticipate the optimal behavior of others. However, users do not observe the current pool of pending transactions when submitting fee offers.}

 \subsubsection{The Dependence of Fees on Wait Time}
 Each active user posts a fee $tf$ so as to maximize utility, weighing the cost of posting higher fees against the benefit of lowering the expected wait time. Standard arguments imply that equilibrium transaction fees are a continuous, monotonous function of user wait-cost $tf\left(c\right)$ satisfying  $tf(0)=0$ and  $tf'(c)>0$. Monotonicity implies that $G\left(tf(c)\right)=F\left(c\right).$  Equilibrium transaction fees $tf(c)$ solve the first order  condition $W'\left(tf|G\right)=-\frac{1}{c}$ which is equivalent to the following differential equation:
 $$\tilde{W}'\left(c|F\right)cf(c)=tf'(c),$$ 
 where $\tilde{W}$ directly maps wait cost into wait times and satisfies $-W'(tf|G)=W'(c|F)$.\footnote{Note that $G\left(tf(c)\right)=F(c)$ implies $G'(b(c))b'(c)=f(c)$. Thus $W'(tf|G)=W'(tf|G)f(c)/b'(c).$} Integrating and using the property that users with no delay cost post zero fees, transaction fees can be fully expressed in terms of equilibrium wait times:
 \begin{equation}\label{eq:fees}
 tf(c)=\int_0^{c}\tilde{c}f(\tilde{c})\tilde{W}'\left(\tilde{c}|F\right)d\tilde{c}.
 \end{equation}
 
Inspecting \eqref{fees}  reveals that equilibrium fees have the externality correcting property  inherent to the Vickrey-Groves-Clark (VCG) auction mechanism.\footnote{An externality arises as each user delays the  verification of the transactions of other users. Despite the fact that users do not internalize the cost that they pose on others, the VCG auction mechanism incentivices "truthful bidding", implying that the users with higher valuations submit higher bids.} For our purposes, the fact that offered transaction fees are socially optimal has the implication that none of the potential cost-efficiencies exhibited by the decentralized equilibrium arise from suboptimal user behavior.

 \subsubsection{Equilibrium Wait Time}
 Blocks  can process up to $S$ transactions per time period  and arrive randomly at Poisson rate $\mu$.  The equilibrium wait time naturally depends on system congestion, which in our context is defined as $\rho\equiv \lambda/S\mu$ - the average ratio of demand  $\lambda$ to the average number of transactions that can be processed per time unit. $\rho<1$ implies that while users may experience delays, all transactions will eventually be processed by the system.
 
 To characterize the equilibrium wait time, we build on results derived in \cite{HubermanLeshnoMoaellemi2019}.   The authors show that for sufficiently high levels of block capacity, wait times are fully characterized by system congestion $\rho$.\footnote{\cite{HubermanLeshnoMoaellemi2019} show wait times  are appropriately approximated for $S=20$. Block capacity in Bitcoin is equal to 2000.  As outlined earlier, we expect our proposed protocol implementation  to be capable of handling a volume of transactions per block that outperforms Bitcoin by magnitudes. We therefore feel comfortable to use  the characterization of wait times in \eqref{wait_time} for the remainder of the analysis.} We restate this result in the following theorem. 
 
 \begin{theorem}[\cite{HubermanLeshnoMoaellemi2019}]

 \begin{enumerate}
 \item For any $\rho\equiv\lambda/(\mu S)\in\left(0,1\right)$ the equilibrium wait time for a user  with delay cost $c$, $W(c;S,\mu,\lambda)$ is given by: 
 
 \begin{equation}\label{eq:wait_time}
 W(c;\rho)=\frac{\mu^{-1}}{1-\left(1+\alpha\left(\rho\bar{F}(c)\right)\right)e^{-\alpha(\rho\overline{F}(c))}},
 \end{equation}
  where $\overline{F}(c)\equiv 1-F(c)$ and $\alpha(x)$ is the real root of the algebraic equation $e^{-\alpha}+x\alpha-1=0$.
  \item  Equilibrium wait time is increasing and convex in congestion $\rho$ and decreasing in user wait cost $c$.
  \end{enumerate}
 \end{theorem}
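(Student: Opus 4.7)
The plan is to recast the problem as a bulk-service queue with effective priority: by the monotonicity of the equilibrium fee schedule $tf(\cdot)$ derived earlier, a user with wait cost $c$ is always served before users with cost below $c$ and always after users with cost above $c$. Consequently, from this user's perspective only transactions with cost $>c$ are relevant competitors, and these arrive as an independent Poisson thinning of the user stream with rate $\lambda\bar{F}(c)$. The effective congestion competing for block capacity $S$ at arrival rate $\mu$ is therefore $\rho\bar{F}(c)$, which is why the wait time depends on $\rho$ and $c$ only through this composite argument.

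First I would characterize the stationary distribution of the number of higher-priority pending transactions at the user's arrival epoch. Using PASTA (Poisson Arrivals See Time Averages), this reduces to analyzing a bulk-service queue with Poisson arrivals at rate $\lambda\bar{F}(c)$ and batch service of size at most $S$ at rate $\mu$. A standard generating function argument applied to the balance equations yields a functional equation whose smallest positive root characterizes the stationary probability of emptiness; passing to the large-$S$ regime (justified by the footnote after the theorem), this root converges to a quantity indexed by $\alpha(\rho\bar{F}(c))$, where $\alpha$ solves $e^{-\alpha}+x\alpha-1=0$ (a branching-process-type fixed-point equation familiar from extinction probability analyses). The expected wait time then equals $\mu^{-1}$ times the expected number of blocks needed to clear these pending high-priority transactions, which after simplification yields the displayed closed form.

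For the monotonicity and convexity claims I would differentiate the defining equation $e^{-\alpha(x)}+x\alpha(x)-1=0$ implicitly to obtain $\alpha'(x)=\alpha(x)/(e^{-\alpha(x)}-x)$. This is strictly positive on $(0,1)$ since $e^{-\alpha}=1-x\alpha>x$ whenever $\alpha>0$ and $x\in(0,1)$. Hence $\alpha$ is strictly increasing in its argument; combined with the fact that $\bar{F}(c)$ is decreasing in $c$ while $\rho\bar{F}(c)$ is increasing in $\rho$, the denominator $1-(1+\alpha)e^{-\alpha}$ moves in the appropriate directions, delivering $\partial W/\partial\rho>0$ and $\partial W/\partial c<0$.

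The main obstacle will be convexity in $\rho$, since $W$ is the reciprocal of a composition of several nonlinear functions. I would approach it by writing $W(c;\rho)^{-1}=\mu\bigl(1-(1+\alpha)e^{-\alpha}\bigr)$ with $\alpha=\alpha(\rho\bar{F}(c))$, noting that $g(\alpha)\equiv 1-(1+\alpha)e^{-\alpha}$ satisfies $g'(\alpha)=\alpha e^{-\alpha}>0$ and is concave in $\alpha$, and then combining with an estimate showing $\alpha(x)$ is concave in $x$. Concavity of $\alpha$ follows from a second implicit differentiation yielding $\alpha''(x)=-\alpha'(x)[2-\alpha'(x)e^{-\alpha}]/(e^{-\alpha}-x)$, whose sign can be bounded using the identity $e^{-\alpha}=1-x\alpha$. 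Once $\rho\mapsto W(c;\rho)^{-1}$ is shown concave and positive, convexity of $W$ in $\rho$ follows from the standard fact that the reciprocal of a positive concave function is convex. Rigorously justifying the large-$S$ limit underlying the closed form is a further subtlety, but this is the subject of the footnote and addressed in the cited work.
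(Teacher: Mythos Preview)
The paper does not prove this theorem: its entire proof reads ``See Lemma 14 in \cite{HubermanLeshnoMoaellemi2019}.'' So there is no argument in the paper to compare your sketch against; you are attempting to reconstruct the cited result, which is more than the authors do here. Your high-level plan---priority via monotone bidding, Poisson thinning at rate $\lambda\bar F(c)$, bulk-service queue, generating-function root in the large-$S$ limit---is the right architecture and matches the approach of the cited source.

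However, your monotonicity step contains a genuine sign error. You claim $\alpha'(x)=\alpha(x)/(e^{-\alpha(x)}-x)>0$ because ``$e^{-\alpha}=1-x\alpha>x$.'' That inequality is false on the relevant root: for the nontrivial positive root of $e^{-\alpha}+x\alpha-1=0$ one has $\alpha(x)>-\log x$, hence $e^{-\alpha}<x$, so the denominator is \emph{negative} and $\alpha$ is strictly \emph{decreasing} on $(0,1)$ (consistent with $\alpha(x)\to\infty$ as $x\to 0$, which the paper itself uses in the proof of Theorem~2). A quick check: at $x=1/2$ the root is $\alpha\approx 1.6$, and $1-x\alpha\approx 0.2<0.5=x$. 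With the correct sign, your chain still yields $\partial W/\partial\rho>0$ and $\partial W/\partial c<0$, because $g(\alpha)=1-(1+\alpha)e^{-\alpha}$ is increasing in $\alpha$ and $\alpha$ is decreasing in its argument; but as written your reasoning would deliver the opposite conclusions.

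The convexity sketch also needs repair. You propose to show $\rho\mapsto W^{-1}$ is concave via concavity of $g$ and concavity of $\alpha$, but $g''(\alpha)=(1-\alpha)e^{-\alpha}$ changes sign at $\alpha=1$, so $g$ is not globally concave and the composition argument does not go through as stated. A cleaner route is to differentiate $W$ directly (as the paper does in the appendix for Theorem~2), obtaining $\partial W/\partial\rho = e^{-\alpha}\alpha^{3}W^{3}\bar F(c)$ up to the chain-rule factor, and then argue positivity of the second derivative from that expression.
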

 \begin{myproof1}
 See Lemma 14 in \cite{HubermanLeshnoMoaellemi2019}.
 \end{myproof1}
 
  The  relationship between equilibrium  wait times and optimal bidding behavior implies that  users with higher wait cost offer higher fees, and wait less for transaction verification. Higher levels of system congestion increase wait times - and therefore transaction fees - for all users.  Theorem 1 implies that wait times - and, therefore, transaction fees - are fully characterized by system congestion $\rho$,\footnote{In principle, wait-times depend on block capacity $S$ in addition to congestion. On the one hand for a given transaction, the inclusion in  a block depends on how many pending transactions have accumulated at the time  block is generated. This depends on system congestion $\rho$.  On the other hand, the inclusion of the transaction might also depend on new transactions with higher priority that might arrive before the next block. The  arrival of higher priority users creates significant randomness and thus affects  wait times only if blocks have sufficiently small capacities such that new arriving  transactions can significantly change the composition of included transactions. For sufficiently high $S$, however, wait times are nearly independent from this source of variation.}  implying that equilibrium    stability in fees  is equivalent to stability in system congestion $\rho$.  
  
 \subsubsection{Equilibrium Transaction Fees}
 To summarize the optimal bidding behavior across all users, Theorem 1 and \eqref{fees} allow to derive  the average equilibrium fee revenue  $Rev$ per unit of time, given block capacity $S$ and system congestion $\rho$: 
 
 \begin{equation}\label{eq:fee_revenue}
 Rev\left(S,\rho\right) = S\cdot\int_0^{\bar{c}} tf(c)dF(c)=S\cdot\rho\int_0^{\bar{c}} \left(\bar{F}(c)-cf(c)\right)W\left(c;\rho\right)dc\equiv S\cdot \psi\left(\rho\right).
 \end{equation}
 
 $\psi(\rho)$ in \eqref{fee_revenue} is  the average level of transaction fees per unit of time that users are willing to pay for service.\footnote{As we characterize steady states in a fee market where there is  a constant  flow of transactions in and out of a system that eventually verifies all transactions and where all users participate, the integral is  taken over the entire domain of $c$.} As $Rev$ summarizes the optimal bidding behavior of users,  demand side effects stemming from changes in either block capacity $S$ or demand $\lambda$ are fully captured by its properties.
 
 The key focus of our analysis is to characterize the  system's ability to induce incentive compatible increases in throughput so as to absorb an increase in demand $\lambda$ at constant average prices. The responsiveness of average fees $\psi\left(\rho\right)$ with respect to changes in demand $\lambda$ - or generally changes in congestion $\rho$ - is key for this purpose.  To this end, we define the elasticity of  $\psi(\rho)$ in \eqref{fee_revenue} with respect to $\rho$ as $\varepsilon\left(\rho\right)\equiv \frac{\partial \log \psi\left(\rho\right)}{\partial \log \rho}\equiv \frac{\rho}{\psi(\rho)}\frac{\partial \psi(\rho)}{\partial \rho}$.  $\varepsilon\left(\rho\right)$ captures the percentage increase in equilibrium fees in response to a percentage increase in congestion $\rho$. The following theorem characterizes key properties of  $\varepsilon\left(\rho\right)$ and in particular shows that it admits a uniform upper bound, independently of the underlying distribution of wait cost $F(c)$. The proof is delegated to \apref{proofs}.

 \begin{theorem}
 The elasticity of equilibrium average transaction fees $\psi\left(\rho\right)$ with respect to system congestion $\rho$, $\varepsilon\left(\rho\right)\equiv \frac{\partial \log \psi\left(\rho\right)}{\partial \log \rho}$,  is bounded below by 1,  increasing and convex in $\rho$. Further, there exists  $\overline{\varepsilon}\left(\rho\right)$ such that $\varepsilon\left(\rho\right)\leq \overline{\varepsilon}\left(\rho\right)$ for all $\rho\in(0,1)$ and any distribution of user wait cost $F(c)$. 
 \end{theorem}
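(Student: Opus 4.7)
Starting from \eqref{fee_revenue}, I would exploit that the wait-time $W(c;\rho)$ in Theorem 1 depends on $(c,\rho)$ only through the scalar $y := \rho \bar F(c)$. Let $h(y) := \mu^{-1}/\bigl(1 - (1+\alpha(y))e^{-\alpha(y)}\bigr)$ so that $W(c;\rho) = h(\rho\bar F(c))$, and set $H(y) := \int_0^y h(s)\,ds$. The key observation is that $\rho f(c)\, h(\rho \bar F(c)) = -\tfrac{d}{dc} H(\rho\bar F(c))$, so one integration by parts on $(\bar F(c)-cf(c))W(c;\rho)$ collapses $\psi$ into
\[
\psi(\rho) = \int_0^{\bar c} \bigl(y h(y) - H(y)\bigr)\, dc, \qquad y=\rho\bar F(c).
\]
Writing $q(y):=yh(y)-H(y)$ and noting $q'(y)=yh'(y)$, a direct differentiation yields $\rho\psi'(\rho) = \int_0^{\bar c} y^2 h'(y)\, dc$, hence
\[
\varepsilon(\rho) \;=\; \frac{\int_0^{\bar c} y^2 h'(y)\, dc}{\int_0^{\bar c} \bigl(y h(y) - H(y)\bigr)\, dc}.
\]
This representation isolates $F$ into the measure $dc$ on $[0,\bar c]$, with every $y$-dependence funneled through a single scalar function $h$ of a single scalar argument $y\in(0,\rho]$.

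\textbf{Lower bound $\varepsilon(\rho) \geq 1$.} It suffices to verify the pointwise inequality $y^2 h'(y) \geq y h(y) - H(y)$. Setting $p(y) := y^2 h'(y) - y h(y) + H(y)$, a short calculation gives $p(0)=0$ and $p'(y) = y\,\bigl(y h'(y)\bigr)'$, so the claim reduces to showing that $y h'(y)$ is nondecreasing in $y$. Substituting the implicit parameterization $y(\alpha)=(1-e^{-\alpha})/\alpha$ of $\alpha(y)$ turns this into $\phi(\alpha) := \alpha^2 e^{-\alpha}(1-e^{-\alpha})/\bigl(1-(1+\alpha)e^{-\alpha}\bigr)^3$ being nonincreasing in $\alpha$, a one-variable calculus check I would verify by computing $(\log\phi)'(\alpha)$ and bounding it.

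\textbf{Monotonicity and convexity of $\varepsilon(\rho)$.} Writing $\varepsilon = A/B$ with $A = \int y^2 h'(y)\, dc$ and $B = \int q(y)\, dc$, the identity $\rho B'(\rho) = A$ makes $\varepsilon'(\rho) \geq 0$ equivalent to $\rho A'(\rho)/A(\rho) \geq \varepsilon(\rho)$. Since $\rho A'(\rho)/A(\rho) = 2 + \int y^3 h''(y)\,dc / \int y^2 h'(y)\,dc$ can be read as $2 + \E_w[y h''(y)/h'(y)]$ under the weight $w\propto y^2 h'(y)$, monotonicity reduces to a pointwise bound comparing the local elasticity $y h''/h'$ to the ratio $y^2 h'/q$. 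Convexity $\varepsilon''(\rho)\geq 0$ follows from a second differentiation and an analogous pointwise inequality involving $h''$ and $h'''$. I expect this step to be the main obstacle: while the algebraic reductions are clean, the required pointwise calculus inequalities on combinations of $h, h', h'', h'''$ via the $\alpha$-parameterization demand careful bookkeeping, and one must check that the pointwise bounds are tight enough to survive the weighted averaging against $dc$.

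\textbf{Uniform upper bound.} Since both integrands in the ratio representation of $\varepsilon(\rho)$ are nonnegative scalar functions of $y$ alone, the elementary inequality $\int f\, dc \leq \sup_c (f/g)\,\int g\, dc$ (valid for $g\geq 0$) gives
\[
\varepsilon(\rho) \;\leq\; \sup_{c\in[0,\bar c]} \frac{y(c)^2 h'(y(c))}{y(c) h(y(c)) - H(y(c))} \;\leq\; \sup_{y\in(0,\rho]} \frac{y^2 h'(y)}{y h(y) - H(y)} \;=:\; \overline\varepsilon(\rho),
\]
because $y(c) = \rho\bar F(c) \in [0,\rho]$ regardless of $F$. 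The right-hand side depends only on $\rho$ and the fixed scalar function $h$, providing the required distribution-free envelope and completing the plan.
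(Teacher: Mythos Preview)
Your approach is correct but differs from the paper's in a meaningful way. The paper differentiates $\psi(\rho)=\rho\int_0^{\bar c}(\bar F(c)-cf(c))W(c;\rho)\,dc$ directly: the product rule gives the additive ``$+1$'' from the $\rho$ prefactor, and the remaining term is a weighted average of the wait-time elasticity $\rho\bar F(c)\,e^{-\alpha}\alpha^3 W^2$ with weights $\omega_c\propto(\bar F(c)-cf(c))W(c;\rho)$. The upper bound then follows by observing that this elasticity is increasing in its argument $\rho\bar F(c)\leq\rho$, yielding $\overline\varepsilon(\rho)=1+\rho e^{-\alpha(\rho)}\alpha(\rho)^3 W(\rho)^2$. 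Your route instead integrates by parts \emph{first} to obtain the ratio $\varepsilon(\rho)=\int y^2h'(y)\,dc\big/\int(yh(y)-H(y))\,dc$, and then argues each claim via pointwise inequalities on $h$. The payoff of your decomposition is that both integrands are manifestly nonnegative (since $h$ is increasing, $yh(y)-H(y)=\int_0^y(h(y)-h(s))\,ds\geq 0$), which makes the lower bound and the sup-ratio upper bound rigorous without any sign assumption on $\bar F(c)-cf(c)$; the paper's weighted-average argument implicitly treats the $\omega_c$ as nonnegative, which is not guaranteed for general $F$. On the other hand, the paper's route delivers an explicit closed-form $\overline\varepsilon(\rho)$ in one line, whereas your $\sup_{y\in(0,\rho]}y^2h'(y)/q(y)$ is less explicit (though it coincides with the paper's bound once you show the ratio is monotone in $y$, which is essentially the same calculus check you flag for the lower bound). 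Both treatments leave the monotonicity and convexity of $\varepsilon(\rho)$ as asserted rather than fully verified; you are simply more candid about this, correctly identifying the higher-derivative pointwise checks as the main remaining work.
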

 
 \figref{elasticity_fees} plots the uniform upper bound on the elasticity of average fees with respect to changes in congestion derived in Theorem 2. 
  \begin{figure}
\caption{Upper Bound on the  Congestion Elasticity of Equilibrium Average Fees}
\label{fig:elasticity_fees}
    \centering
    \includegraphics[scale=0.45]{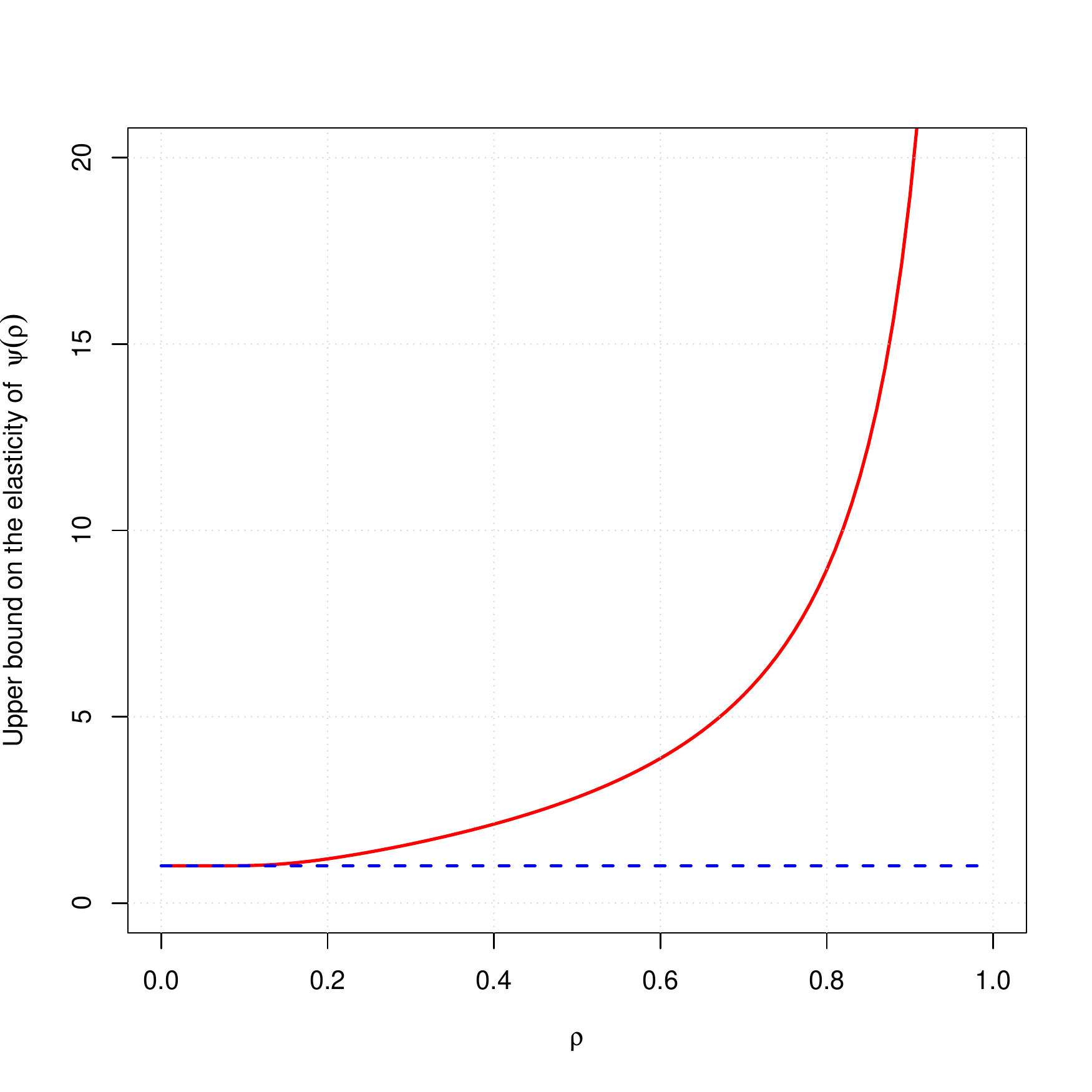}
    
    \footnotesize{Notes: This figure plots the uniform bound on the elasticity of equilibrium average fess given in  \eqref{fee_revenue} with respect  to system congestion $\rho=\lambda/\mu S$ as a function of $\rho$ on the x-axis.}
\end{figure}

 Total mining fees $Rev$ summarize the demand-side in the market. Jointly with the free entry condition for miners, $Rev$ allows us to derive  the equilibrium number of miners $N$ for a given predicate size $S$. 
 
 \subsection{Equilibrium for a Fixed Predicate Size $S$}
 
 \subsubsection{Equilibrium Number of Miners}
 
 A partial equilibrium for a given predicate size $S$ and demand level $\lambda$ is defined by the condition that miners make zero profits in expectation upon entry.\footnote{While this equilibrium is technically  a steady state - owing to the continuous entrance and exit of users - we effectively treat it as a static equilibrium condition.}   Using \eqref{free_entry} and the characterization of total mining fee revenues, the equilibrium number of miners $N$ at predicate size $S$, demand $\lambda$ and nominal block rewards $PBr$ is given by: 
 \begin{equation}\label{eq:partial_eq}
 N\left(S,\lambda,Pbr\right)=\frac{1-\sigma}{f_e}\cdot\left(S\psi\left(\rho\right)+Pbr\right).
 \end{equation}
 
   Predicate size $S$, demand $\lambda$ and nominal block rewards $PBr$ summarize the profit incentives of miners. Adjustments in predicate size $S$, therefore, change the profit incentives and entry behavior of miners. Cost-efficiency in a decentralized equilibrium crucially depends on whether adjustments of predicate size $S$ can be optimally calibrated so as to induce incentive-compatible equilibrium behavior of miners that is \textit{as if}  miners internalized the externality that they pose on others through Nakamoto consensus. To build intuition for our later results, it is useful to briefly pause to discuss how adjustments in predicate size $S$ shape profit incentives of miners through the condition in \eqref{partial_eq}. 
 
 An increase in demand $\lambda$ initially increases mining fee rewards through an increase in equilibrium user fees $\psi(\rho)$. This increases entry, the equilibrium number of miners, and the computational cost of the system without adding benefits for system capacity or users. To restore cost-efficiency, a subsequent increase in predicate size $S$ has to  direct additional cost incurred by the system toward increases in throughput. Our results imply that competition by miners ensures that this is achieved by additional entry and lower hash-rates for each active pool. 
 
 \subsubsection{Difficulty}
 
 As opposed to for example Bitcoin, difficulty in our model depends only on the total number of proofs  computed per time unit.\footnote{This is because the solution of a single proof commits to the nonce before distributing computation to the pool. This means that, unlike the distribution of double-SHA computations in Bitcoin (that effectively break up nonce generation among pool participants), here the nonce is set by the operator, and all computation forwarded to workers is nonce-specific.} As pools are symmetric, solely distribute one proof computation among pool participants and operate at hashrate $H$, system difficulty is given by:
 \begin{equation} \label{eq:difficulty}
  d=\log\left(\mu\times H\times N\right).
 \end{equation}
 
Using the equilibrium expression for the number of miners and the solution to the miner's problem, difficulty can be expressed as:\footnote{Note that the solution to the miners' problem implies that equilibrium hashrates are given by $U\left(\frac{\sigma f_m}{(1-\sigma)c_m}\right)^\sigma/S$.}
 
 \begin{equation} \label{eq:difficulty_1}
  d=\log\left(\alpha\times \frac{S\psi(\rho)+P\cdot br}{S}\right),
 \end{equation}
 where $\alpha\equiv \mu U\left(\frac{\sigma}{c_m}\right)^{\sigma}\left(\frac{1-\sigma}{f_e}\right)^{1-\sigma}$ is a constant. 
 
 The inability of miners to autonomously adjust cost \textit{and} throughput upwards as demand rises  causes  rising levels of system congestion, resulting in high wait times and fees for users, and  costly increases in  mining activity at no increase in throughput as demand rises. In light of this significant economic limitation, a core innovation of our protocol design is to provide an "updating rule" for predicate size $S$ and therefore block capacity that correctly infers changes in demand.  While demand $\lambda$ is not directly observable,   \eqref{difficulty_1} highlights that difficulty is a sufficient statistic for changes in demand when block rewards are zero. When block rewards are zero, changes in equilibrium difficulty are proportional to changes in average transaction fees, which, in turn, depend solely on congestion $\rho$.

 \subsection{Dynamic Throughput Adjustment}
 We endow the system with the ability to periodically adapt throughput $S$ according to a pre-specified updating rule.
 
 \begin{definition}[Law of Motion for Throughput] Let $\mathbf{X}$ be a vector of  statistics. A law of motion for throughput $S$ is a function $f$ such that:  \begin{equation}\label{eq:lom_gen}
 S_{t+1} = f\left(S_t,S_{t-1},...;\mathbf{X}_t,\mathbf{X}_{t-1},...\right)
 \end{equation}
 \end{definition}
 Given an updating rule, we can define a dynamic equilibrium, and the notion of a steady state. 
 
 \begin{definition}[Equilibrium]\label{def:steady_state} Given  a starting predicate size $S_0$, level of demand $\lambda$ and nominal block rewards $Pbr$, an equilibrium is a sequence of predicate sizes $\left\{S_t\right\}_{t={1,2,...}}$  such that at each $t$, (i)  transaction fees posted by users satisfy \eqref{fee_revenue}, (ii) the equilibrium number of miners $N_t$ satisfies free entry in \eqref{free_entry} and (iii) the law of motion for $S$ is given by \eqref{lom_gen}. A steady state is reached when changes in predicate size $S$ converge to 0.   
 \end{definition}
 
    So far, we have provided a  framework to analyze how the optimal behavior of miners and users determines fees and the energy usage of the system. In the following, we utilize this framework to analyze whether chosen laws of motion for throughput can replicate equilibrium outcomes of standard competitive markets where firms are able to autonomously adjust production in accordance with demand.\footnote{ \defref{steady_state} defines an equilibrium taking both block rewards $Pbr$ and demand $\lambda$ as given. The underlying assumption is therefore that adjustments in predicate size $S$ occur sufficiently frequently - or equivalently that the  length of periods $dt$ is sufficiently small - for the system to reach steady states between changes in demand or block rewards.}

 \section{Stablizing Congestion} 
 Unlike  standard markets, the level of supply cannot be directly chosen  by  miners, but is set by the protocol through the predicate size $S$. Absent changes in $S$, increases in demand raise congestion in the system causing exorbitant cost at little benefit to users. In this section we show that  suitable updates of throughput help overcome this limitation. Specifically, as equilibrium difficulty encodes changes in demand, we  show that updates of predicate size $S$ in accordance with changes in  difficulty allow to adjust throughput to changes in demand. This section outlines why stable levels of congestion are desirable, and shows by way of simple examples how our framework allows us to analyze the viability of various update rules. 
 
  \subsection{Characterizing Efficient Outcomes}
 Here, we briefly outline why stable levels of congestion are desirable from a welfare perspective. In equilibrium, the net benefit that the system produces can be characterized in terms of value that transactions generate across all consumers, wait times as well as the cost of energy used by pools:
  
  $$\nu\lambda-\text{Average Wait-time}\left(\lambda,S\right)-\text{Total Energy Cost}\left(S\right).$$ 
  
  Free entry implies that the total energy cost of the system must equal mining rewards. Abstracting from block rewards, total mining revenues increase at least proportionally to system congestion, as do average wait times. To ensure that the net benefits of the system do not decrease as demand $\lambda$ increases, changes in demand $\lambda$ should be offset by proportional changes in $S$. It is easy to see that this would imply that the benefit of the system would remain invariant to how many people use it. 
  
  In conclusion, an ideal adjustment of predicate size $S$ to changes in difficulty  keeps wait times  and congestion $\rho=\lambda/(\mu S)$ constant, allowing for proportional growth in demand and supply.

 \subsection{Floating Difficulty Level}
  Consider a policy rule that upon observing a change in difficulty between perdiods $t-1$ and $t$, updates predicate size $S$ in $t+1$ as follows: 
 \begin{equation} \label{eq:lom_1}
 \log\left(S_{t+1}/S_t\right)=\gamma\left(d_{t}-d_{t-1}\right).
 \end{equation}

We focus on analyzing transitions between equilibrium steady states in response to changes in demand when block rewards are zero. We analyze constraints on the updating parameter $\gamma$ stemming from two objectives. First, $\gamma$ should ensure convergence of the system to a new steady state. Second, economic efficiency requires that changes in demand yield minimal changes in congestion.

 Assume the system is in steady state with predicate size $S_0$ and congestion $\rho^*=\lambda_0/\mu S_0$. We  consider a change in demand $d\log\lambda\equiv \log(\lambda_1)-\log(\lambda_0)$ from $\lambda_0$ to $\lambda_1$. The object of interest is the elasticity of steady state system congestion with respect to this demand shock, $\frac{d\log\rho^*}{
 d\log \lambda}$.  The following proposition summarizes the properties of this object.
 \begin{proposition}\label{prop:floating}
 Consider an initial steady state $\{S_0,\lambda_0\}$ and assume that block rewards are equal to zero. Under a floating difficulty regime, upon a change in demand $d\log \lambda\equiv \log\lambda_1-\lambda_0$,
 \begin{enumerate}
 \item   the system converges to a new steady state if $\gamma<\frac{1}{\overline{\varepsilon}\left(\overline{\rho}\right)}$, where $\overline{\varepsilon}\left(\rho\right)$ denotes the upper bound on $\varepsilon\left(\rho\right)$ derived in theorem 2 and $\overline{\rho}\equiv\max\left\{\lambda_0/\mu S_0,\lambda_1/\mu S_0\right\}$. 
 
 \item if $\gamma\in \left(0,\frac{1}{\overline{\varepsilon}\left(\overline{\rho}\right)}\right)$, then steady state changes in system congestion are strictly less than proportional to changes in demand: $\frac{d\log\rho^*}{d\log\lambda}\in \left(\frac{1}{2},1\right)
 $
  \end{enumerate}
 \end{proposition}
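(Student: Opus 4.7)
The plan is to reduce the dynamics to a nonlinear second-order recursion in $s_t\equiv\log S_t$, exploit the contraction induced by the update rule, and then solve the resulting implicit equation for the steady state. Since $br=0$, the difficulty expression simplifies to $d_t = \log\alpha + \log\psi(\rho_t)$ with $\rho_t = \lambda_t/(\mu e^{s_t})$, so the update rule becomes
\begin{equation*}
s_{t+1}-s_t \;=\; \gamma\bigl[\log\psi(\rho_t)-\log\psi(\rho_{t-1})\bigr],
\end{equation*}
with $s_1 = s_0$ because the system starts in steady state and demand shifts only at $t=1$.

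For Part 1, I would apply the mean value theorem to the map $s\mapsto\log\psi\bigl(\lambda_1/(\mu e^s)\bigr)$, whose derivative equals $-\varepsilon(\rho(s))$, rewriting the recursion as
\begin{equation*}
s_{t+1}-s_t \;=\; -\gamma\,\varepsilon\bigl(\rho(\xi_t)\bigr)(s_t-s_{t-1}) \qquad (t\ge 2),
\end{equation*}
for some $\xi_t$ between $s_{t-1}$ and $s_t$. The goal is then to establish (i) by induction that $\rho_t\le\overline{\rho}$ for every $t$, so that monotonicity of $\varepsilon$ (Theorem 2) gives $\varepsilon(\rho(\xi_t))\le\overline{\varepsilon}(\overline{\rho})$ and the contraction factor never exceeds $\gamma\,\overline{\varepsilon}(\overline{\rho})<1$, and (ii) that the increments $|s_{t+1}-s_t|$ consequently decay geometrically. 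Absolute summability of increments then yields convergence of $\{s_t\}$ to a finite limit $s^*$.

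For Part 2, I would telescope the recursion to obtain $s_{T+1}-s_0 = \gamma(d_T-d_0)$ for every $T\ge 1$ and pass to the limit using Part 1 to get the steady-state relation
\begin{equation*}
\log S^* - \log S_0 \;=\; \gamma\bigl[\log\psi(\rho^*) - \log\psi(\rho_0)\bigr].
\end{equation*}
Combining with the identity $\log\rho = \log\lambda - \log\mu - \log S$ and differentiating at $\lambda=\lambda_0$ gives
\begin{equation*}
\frac{d\log\rho^*}{d\log\lambda} \;=\; \frac{1}{1+\gamma\,\varepsilon(\rho^*)}.
\end{equation*}
Strict positivity of $\gamma\,\varepsilon(\rho^*)$ delivers the upper bound of $1$. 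For the lower bound of $1/2$, the hypothesis $\gamma<1/\overline{\varepsilon}(\overline{\rho})$, together with $\rho^*\in[\rho_0,\overline{\rho}]$ (which follows from the boundedness in Part 1) and monotonicity of $\varepsilon$, implies $\gamma\,\varepsilon(\rho^*)<1$.

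The main obstacle is step (i) of Part 1: ruling out that the sign-alternating recursion ever pushes $s_t$ strictly below $s_0$ and hence $\rho_t$ strictly above $\overline{\rho}$. This requires an induction coupling the alternating signs of the increments with the contraction estimate, showing that each downward swing is strictly smaller in magnitude than the preceding upward swing, so the trajectory remains trapped in a shrinking interval anchored at $s_0$. Once this boundedness is established, the remainder of the argument is essentially mechanical.
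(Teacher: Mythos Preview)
Your approach is correct and takes a genuinely different route from the paper's. The paper's proof consists of a single displayed formula: it asserts the series representation
\[
\frac{d\log\rho^*}{d\log\lambda} \;=\; 1+\sum_{t=1}^\infty (-\gamma)^t\prod_{k}\varepsilon(\rho_k),
\]
and then appeals to ``standard properties of geometric sums'' together with the bound $\varepsilon(\rho_t)\le\overline{\varepsilon}(\overline{\rho})$ from Theorem~2. Both convergence (Part~1) and the bracketing in $(1/2,1)$ (Part~2) are read off from the alternating/geometric structure of this series, with no explicit treatment of why $\rho_t$ stays below $\overline{\rho}$ along the path.

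You instead telescope the update rule to obtain the closed-form steady-state relation $\log S^*-\log S_0=\gamma[\log\psi(\rho^*)-\log\psi(\rho_0)]$, and then implicit differentiation gives directly
\[
\frac{d\log\rho^*}{d\log\lambda}=\frac{1}{1+\gamma\,\varepsilon(\rho^*)}.
\]
This is cleaner for the statement at hand: the dependence on $\rho^*$ alone is transparent, and both bounds follow immediately from $0<\gamma\,\varepsilon(\rho^*)<1$. For Part~1 you set up a genuine contraction via the mean value theorem, and you correctly isolate the one nontrivial step the paper glosses over---establishing the inductive invariant $\rho_t\le\overline{\rho}$---which your alternating-sign/shrinking-increment argument handles. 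The paper's series viewpoint is perhaps more natural if one also wants finite-time information about the transition path; your closed form is more direct and more transparent for the steady-state elasticity actually being claimed.
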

 \begin{proof}
 Using the recursive structure of the system, write the elasticity of steady state congestion with respect to a demand shock as:
 \begin{equation*}
\frac{d\log\rho^*}{d\log\lambda} = 1+\sum_{t=1}^\infty \left(-\gamma\right)^t\left(\prod_{k=0}^t\varepsilon\left(\rho_t\right)\right)
 \end{equation*}
 Using theorem 2, the proposition follows from standard properties of geometric sums. 
 \end{proof}
 
 The first part of \propref{floating} shows that dynamic stability imposes a feasibility restriction on the rate at which predicate size (and thus throughput) can be increased in response to some demand shock. For $\gamma > 1/\overline{\varepsilon}\left(\overline{\rho}\right)$, changes in predicate size would be large enough at every step to ensure that the system diverges from equilibrium. Thus, the value of $\gamma$ needs to be low enough so that this is prevented. Note that this imposes a new trade-off in efficiency: a lower $\gamma$ ensures that higher demand shock levels can be `absorbed' without instability, but also means that the system will take longer to arrive at equilibrium (as predicate updates are less sensitive to changes in difficulty).
 
 The second part of the above result demonstrates that such a record keeping system with $\gamma > 0$ does strictly better than one where $S$ is fixed. This is due to the upper bound on the elasticity of steady state system congestion, which shows that congestion will increase at a fraction of the non-adaptive case. Although they will increase comparatively less, it is immediate that transaction fees remain unbounded: an arbitrary sequence of positive demand shocks will always increase difficulty arbitrarily high, pushing up equilibrium fees as well. Further, to ensure dynamic stability at possibly high levels of congestion requires  sufficiently low $\gamma$, which impedes the speed of convergence. In fact, the only choice for $\gamma$ that ensures dynamic stability under a floating difficulty regime as $\lambda\rightarrow\infty$ is $\gamma=0$.  
 
 In the traditional Bitcoin protocol, an unbounded difficulty parameter is required due to the security benefits that it provides. This is by making the cost of disrupting the network scale with $d$. Note, however, that an increase in difficulty here is not necessarily required for system security: the costs of participation are increasing with respect to both difficulty \textit{and} predicate size. This means that `difficulty' in the sense of Bitcoin is actually better understood to be $Sd$ here, as both of these parameters are monotonically increasing with respect to costs per unit time. This motivates the question of whether \textit{constant} equilibrium transaction fees can be achieved without compromising system security.

\subsection{Pegged Difficulty Level}
Consider a policy rule that updates predicate size $S$ as follows: 
\begin{equation}\label{eq:peg}
\log\frac{S_{t+1}}{S_t}=\gamma\left(d_t-d^*\right),
\end{equation}
where $d^*$ is a \textit{pre-specified} level of difficulty. 
  
  Again, consider a demand shock $d\log\lambda$. The following proposition draws out conditions on the update parameter $\gamma$ under which the system maintains a stable steady state level of congestion $\rho$ under zero block rewards. 
  \begin{proposition}
  Assume that block rewards are equal to zero, and difficulty is pegged to $d^*$. Denote $\bar{\rho}^*\equiv \psi^{-1}\left(\exp(d^*/\alpha)\right)$ with $\alpha$ given in $\eqref{difficulty_1}$.  Under a pegged difficulty regime, upon a change in demand $d\log \lambda\equiv \log\lambda_1-\lambda_0$ yielding a change from $\rho^*$ to $\rho$,  $\frac{d\log \rho^*}{d\log \lambda}=0$ for any $\gamma\in\left(0,\frac{2}{\overline{\varepsilon}\left(\overline{\rho}\right)}\right),$ where $\overline{\rho}\equiv \max\{\rho^*,\rho\}$. 
  \end{proposition}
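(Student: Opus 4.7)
The plan is to split the argument into (a) characterizing the steady state induced by the pegged rule and (b) verifying that the iterates converge to that steady state from the post-shock initial condition. Since the elasticity in the proposition is a long-run object, part (a) delivers the result once part (b) is in hand.

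For (a), setting $Pbr=0$ in \eqref{eq:difficulty_1} gives $d_t=\log\bigl(\alpha\,\psi(\rho_t)\bigr)$, so difficulty is a strictly monotone function of congestion $\rho_t=\lambda/(\mu S_t)$ alone (Theorem 2 yields $\varepsilon(\rho)\geq 1>0$). A steady state of \eqref{eq:peg} requires $S_{t+1}=S_t$, i.e., $d_t=d^*$, and hence $\psi(\bar\rho^*)=e^{d^*}/\alpha$, pinning $\bar\rho^*$ as a function of the peg $d^*$ alone. Because $d^*$ does not depend on demand, the long-run congestion is invariant to $\lambda$, yielding $d\log\rho^*/d\log\lambda=0$ provided $\rho_t\to\bar\rho^*$.

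For (b), I would rewrite the dynamics in logs. Dividing $\rho_{t+1}=\lambda/(\mu S_{t+1})$ by $\rho_t$ and using \eqref{eq:peg} together with $d_t-d^*=\log\psi(\rho_t)-\log\psi(\bar\rho^*)$ gives $\log\rho_{t+1}-\log\rho_t=-\gamma\bigl[\log\psi(\rho_t)-\log\psi(\bar\rho^*)\bigr]$. Writing $x_t=\log\rho_t$, $x^*=\log\bar\rho^*$, and $\phi(x)=\log\psi(e^x)$, this becomes a one-dimensional map $x_{t+1}=T(x_t)\equiv x_t-\gamma\bigl[\phi(x_t)-\phi(x^*)\bigr]$ with unique fixed point $x^*$. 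By the chain rule $T'(x)=1-\gamma\,\varepsilon(e^x)$, so, using that $\varepsilon$ is increasing in $\rho$ (Theorem 2) together with the hypothesis $\gamma<2/\overline{\varepsilon}(\overline{\rho})$, one obtains $|T'(x)|\leq \max\{|1-\gamma|,\,|1-\gamma\,\overline{\varepsilon}(\overline{\rho})|\}\equiv k<1$ for every $\rho\leq\overline{\rho}$. The mean-value theorem then delivers the contraction inequality $|x_{t+1}-x^*|\leq k\,|x_t-x^*|$ on this region, and hence $\rho_t\to\bar\rho^*$ geometrically.

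The main obstacle is verifying that the trajectory never leaves the range on which the uniform bound of Theorem 2 is applicable, since when $\gamma\,\overline{\varepsilon}(\overline{\rho})>1$ the map $T$ may overshoot and oscillate around $x^*$. I would handle this with a standard envelope argument: letting $M_t\equiv\sup_{s\leq t}|x_s-x^*|$, the one-step contraction bound implies $M_{t+1}\leq M_t$, so the iterates remain in the closed ball of radius $|x_1-x^*|$ around $x^*$. Because $\overline{\rho}=\max\{\rho^*,\rho\}$ is chosen so that both the initial pegged steady state $\rho^*=\bar\rho^*$ and the post-shock congestion $\rho$ lie in $[0,\overline{\rho}]$, the envelope is contained in the region where the uniform bound holds, making the linearization step self-consistent and closing the proof.
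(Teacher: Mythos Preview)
Your approach is correct and is both more explicit and more informative than the paper's. The paper writes the elasticity as the infinite product
\[
\frac{d\log\rho^*}{d\log\lambda}=\varepsilon(\rho_0)\prod_{t\geq 1}\bigl(1-\gamma\varepsilon(\rho_t)\bigr)
\]
and simply observes that, by Theorem~2, each factor has modulus below one on the stated range of $\gamma$, so the product vanishes. Your decomposition---(a) the peg $d^*$ pins the steady-state congestion $\bar\rho^*$ independently of $\lambda$, and (b) the log-dynamics $x_{t+1}=x_t-\gamma[\phi(x_t)-\phi(x^*)]$ form a contraction with slope $1-\gamma\varepsilon$---is equivalent in content (the product factors are exactly the slopes $T'$ along the orbit), but it explains \emph{why} the long-run elasticity is zero rather than merely computing that it is. That is a genuine expository gain.

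One caveat on your invariance step. The claim that the closed ball of radius $|x_1-x^*|$ about $x^*$ lies in $\{\rho\leq\overline{\rho}\}$ is fine when demand \emph{increases} (then $\overline{\rho}=\rho$, $x_1>x^*$, and the ball's upper endpoint is exactly $\log\overline{\rho}$), but not when demand \emph{decreases}: then $\overline{\rho}=\rho^*$, $x_1<x^*$, and the reflected endpoint $2x^*-x_1$ exceeds $\log\overline{\rho}$, so an overshoot of the first iterate can land above $\rho^*$ where your slope bound is not yet justified. The paper's product argument is silent on the same point, so you are not behind; but to close it you should either redefine $\overline{\rho}$ as the supremum of the trajectory or run the induction separately in the two cases, noting that in the decreasing-demand case the first overshoot is already within $k|x_1-x^*|$ of $x^*$ and re-applying the bound on the (slightly enlarged) interval.
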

  \begin{proof}
 Under a pegged difficulty regime, the equilibrium elasticity of congestion with respect to a demand shock can be written as: 
  $$\frac{d\log\rho^*}{d\log\lambda}=\varepsilon\left(\rho_0\right)\prod_{1=1}^\infty \left(1-\gamma\varepsilon\left(\rho_t\right)\right).$$
  The result follows from theorem 2 and the above expression.
  \end{proof}  
  
  Rule \ref{eq:peg} allows for the \textit{a priori} selection of some acceptable congestion level $\rho^*$ by fixing (or `hardcoding') the corresponding $d^*$ in the update rule. In this case, the system will increase predicate size $S$ accordingly until system difficulty \textit{drops} back to $d^*$. Therefore, the long-term equilibrium effect on congestion is zero. Such a system has the desirable property that transaction fees at equilibrium will always revert to constant: the effects of any demand shock will be fully compensated by changes in predicate size, which will revert transaction fees to their previous equilibrium level.
  
  Technically, the crucial change that permits such desirable behavior lies in the `decoupling' between the sufficient statistics for system security and demand in the above update rule. Indeed, the costs of participation rise monotonically with $S$, while costs of transactions are only functions of $d$. Thus, by ensuring that the system will always update $S$ as a reaction to changes in demand so as to fully compensate for any changes in $d$, we are able to keep equilibrium transaction fees constant \textit{regardless} of the demand shock as long as $\gamma$ is low enough.  
  
  \subsection{Positive Block Rewards}
  We conclude our analysis by briefly discussing how positive  block rewards, $br>0$, affect our key results. In the presence of positive nominal block rewards, miners' profit incentives are not fully determined by mining fees and therefore entry and changes in difficulty only are an imperfect measure of demand. Consequently, changes in demand cannot be fully absorbed through proportional scaling of throughput. Further, changes in nominal block rewards - either due to an increase in $P$ or a (scheduled) decrease in block rewards $br$ - will cause changes in throughput that are (potentially) detached from fundamental user demand. While nominal block rewards serve as a subsidy to sustain mining incentives at low levels of initial demand, they therefore pose a cost in terms of economic efficiency. 
  
  \section{Towards an Instantiation}\label{sec:feasibility}
  
  The feasibility of the construction in the preceding sections depends heavily on the properties of the underlying proof system utilized, as well as on the fulfillment of technical requirements for the underlying PoW scheme. In this section, we (1) outline the engineering ingredients essential to the efficient realization of such a system, (2) illustrate that the economic assumptions made are consistent with empirical observations, and (3) identify potential design challenges.
  
  \subsection{Proof System Choice} Research into the design and implementation of proof systems relevant for use in the above context has seen rapid advancements in recent years. Current implementations of Succinct Non-interactive ARguments of Knowledge (SNARKs) (see \cite{groth2016size},  \cite{ben2013snarks}, \cite{cryptoeprint:2013:279} for definitions) provide extremely fast proof verification times and small proof sizes, seeing active use in blockchain protocols (see \cite{sasson2014zerocash} for an example). Of such systems, the main relevant properties are:
  \begin{enumerate}
      \item Succinctness: Proof size needs to be small enough to add to the blockchain
      
      \item Trustlessness: The proof system imposes no trust assumptions
      
      \item Variability: Easy to produce/verify many different predicates
      
            \item Recursion: Current proofs can efficiently verify previous proofs

  \end{enumerate}
   
   Such restrictions have been the focus of active research in proof system design. Proof systems with the property of succinctness, or proof sizes that don't vary with respect to predicate size, generate proofs that can be efficiently computed and are small enough (< 1 kB) in state-of-the-art implementations such as \cite{groth2016size}. However, the majority of such systems are either reliant on cryptographic assumptions that are known to be falsifiable in the quantum setting, or come with changes in the trust model such as the need for an initial setup by a trusted third-party, thus violating notions of trustlessness. Moreover, such implementations do not provide the flexibility needed by the above model to adaptively change the number of transactions being verified in each block - doing so here comes with the corresponding prohibitive cost of rerunning the trusted setup procedure each time we update the size of $S$. This is because in these constructions the creation/validation of proofs for a given predicate require the initial generation of a predicate-specific Single Reference String (SRS) that is required in all subsequent computation.
   
   Recent work has looked into resolving the above challenges while maintaining efficiency. SNARK constructions that are based on a \textit{universal} SRS have been developed (\cite{cryptoeprint:2019:099}, \cite{cryptoeprint:2018:280}) in which the SRS generation procedure needs to be done only once for all potential predicates. This means that one SRS suffices to generate proofs for any number of transactions (i.e. any predicate size). Other designs with different trade-offs have also been developed (for more information, see \cite{cryptoeprint:2019:1076}, \cite{cryptoeprint:2019:1021}, \cite{cryptoeprint:2018:046}, \cite{cryptoeprint:2019:1047}, \cite{cryptoeprint:2019:953}).
   
   Indeed, the `ideal' proof system should be (1) trustless (rely on no trusted setup procedure), (2) plausibly quantum-resistant, (3) support proofs for multiple predicates through a universal SRS, and (4) support recursive proof composition - or the ability to verify previous proofs efficiently. The first three technological requirements together would be required for an effective instantiation of the above protocol, while (4) would cut the bandwidth costs needed by new end users to enter the system to negligible. 
   
   \subsection{Integrating Proof of Work}
   
   The core of the proposed protocol relies on a PoW procedure that generates a proof of correctness for a set number of transactions that are verified (the 'predicate size') in each block. For the proof generation procedure to act as a suitable PoW function, certain restrictive properties need to be fulfilled. This is achieved for a simple payments protocol in \cite{cryptoeprint:2020:190}, built on top of a recursive proof system. Since the ability to use proof generation as a suitable PoW puzzle is essential for the present proposal, techniques to generate PoW-suitable predicates and/or proof systems is an immediate area of interest for future work. Moreover, assessing whether the approach in \cite{cryptoeprint:2020:190} can be easily extended to proof systems based on different underlying assumptions (such as proof systems that support universal SRS generation) is a natural next question, as is assessing whether this approach can also be used in more complex predicates that support a wider array of features (such as arbitrary code execution/smart contracts). 
   
   \subsection{Distributing Proof Computation}
   
   Since the proof generation techniques that can be potentially used in the current context require substantial computational resources, in the absence of hardware-acceleration techniques this is only feasible if proof generation can be adequately distributed among workers. This approach is similar to the `mining pool' model in current PoW-based cryptocurrencies, wherein a mining pool operator (the "miner") distributes part of the PoW computation to workers who, in turn, are guaranteed a percentage of returns in exchange for working for the specific pool. Such an arrangement can provide workers with the ability to receive less income albeit more frequently, and thus has the benefit of distributing large PoW computational burdens among many small participants.
   
   In the context of proof generation, the work of \cite{wu2018dizk} provides the first framework for SNARK proof distribution among many nodes. Since the guarantees of the proposed protocol depend heavily on the proof generation cost function, empirical performance of SNARK generation in a distributed setting is crucial to assessing the accuracy of the present model. Indeed, the fundamental assumption on the functional form of the production function in this model is directly observed in the empirical results of \cite{wu2018dizk}, indicating that this model is representative of distributed proof production dynamics, at least in the context of \cite{groth2016size}. Moreover, these results suggest a potential $\sigma$ value of $\sigma \approx 0.88$. 
   
   Furthermore, distributed proving techniques for proof systems with desirable properties need to also be developed. More specifically, techniques such as those employed in \cite{wu2018dizk} should be applied to other proof systems that support additional features, such as trustlessness or universal SRS generation.  Finally, in order to fully emulate the threat model inherent to current mining pools, an effective solution needs to be devised to ensure that mining pool participants cannot submit invalid computation to the pool operators. In the present model this would require the workers to also provide a proof-of-correctness for their individual computation to the operator that is both cheap to verify (comparatively to performing the computation itself) and efficient to perform. Although this is not necessarily required in the context of trusted pools, it would be an important feature for any pool open to a public set of (potentially adversarial) workers.
   
   \subsection{Data Availability Considerations}
   
   An important issue that is relevant in the present discussion is that of data availability. Since miners are providing proofs of correctness, in order for participants to update their state accordingly the transactions that resulted in the updated blocks need to also be released. If these are added to the propagated blocks, this becomes a bottleneck in the potential throughput of such a system since block sizes cannot exceed certain sizes ($\sim 1$ MB). A promising line of work in resolving this problem uses Coded Merkle Trees \cite{cryptoeprint:2019:1139}, which can be used to reconstruct the relevant transaction data even in the context of a single honest node. Investigation of whether such an approach can be incorporated here is also a relevant avenue for future work. 
      
\section{Conclusion}
Drawing on recent developments in the literature aimed at optimizing the scaling of distributed payment systems, we have provided a protocol specification that addresses economic cost efficiencies from correctness verification, PoW security guarantees and throughput adaptability. We have shown how to link throughput to an observable system statistic that adequately reflects changes in user demand and have demonstrated by way of an economic analysis how to autonomously regulate incentives of system participants so as to ensure cost-efficient scaling to user demand. We believe that an implementation of this approach is feasible given the current state of knowledge, provided that engineering efforts yield the necessary tools that our theoretical analysis highlights. 

\newpage

\bibliographystyle{plainnat}

\begin{thebibliography}{24}
\providecommand{\natexlab}[1]{#1}
\providecommand{\url}[1]{\texttt{#1}}
\expandafter\ifx\csname urlstyle\endcsname\relax
  \providecommand{\doi}[1]{doi: #1}\else
  \providecommand{\doi}{doi: \begingroup \urlstyle{rm}\Url}\fi

\bibitem[Abadi and Brunnermeier(2018)]{AbadiBrunnermeier2018}
Joseph Abadi and Markus Brunnermeier.
\newblock {Blockchain Economics}.
\newblock NBER Working Papers 25407, National Bureau of Economic Research, Inc,
  December 2018.
\newblock URL \url{https://ideas.repec.org/p/nbr/nberwo/25407.html}.

\bibitem[Back et~al.(2002)]{back2002hashcash}
Adam Back et~al.
\newblock Hashcash-a denial of service counter-measure.
\newblock 2002.

\bibitem[Basu et~al.(2019)Basu, Easley, O'Hara, and
  Sirer]{BasuEasleyOharaMaureen2019}
Soumya Basu, David Easley, Maureen O'Hara, and Emin Sirer.
\newblock {Towards A Functional Fee Market For Cryptocurrencies}.
\newblock Technical report, 2019.

\bibitem[Ben-Sasson et~al.(2013)Ben-Sasson, Chiesa, Genkin, Tromer, and
  Virza]{ben2013snarks}
Eli Ben-Sasson, Alessandro Chiesa, Daniel Genkin, Eran Tromer, and Madars
  Virza.
\newblock Snarks for c: Verifying program executions succinctly and in zero
  knowledge.
\newblock In \emph{Advances in Cryptology--CRYPTO 2013}, pages 90--108.
  Springer, 2013.

\bibitem[Ben-Sasson et~al.(2018)Ben-Sasson, Bentov, Horesh, and
  Riabzev]{cryptoeprint:2018:046}
Eli Ben-Sasson, Iddo Bentov, Yinon Horesh, and Michael Riabzev.
\newblock Scalable, transparent, and post-quantum secure computational
  integrity.
\newblock Cryptology ePrint Archive, Report 2018/046, 2018.
\newblock \url{https://eprint.iacr.org/2018/046}.

\bibitem[Bowe et~al.(2019)Bowe, Grigg, and Hopwood]{cryptoeprint:2019:1021}
Sean Bowe, Jack Grigg, and Daira Hopwood.
\newblock Recursive proof composition without a trusted setup.
\newblock Cryptology ePrint Archive, Report 2019/1021, 2019.
\newblock \url{https://eprint.iacr.org/2019/1021}.

\bibitem[Budish(2018)]{Budish2018}
Eric Budish.
\newblock {The Economic Limits of Bitcoin and the Blockchain}.
\newblock Working paper, 2018.
\newblock URL
  \url{http://faculty.chicagobooth.edu/eric.budish/research/Economic-Limits-Bitcoin-Blockchain.pdf}.

\bibitem[Chen et~al.(2019)Chen, Papadimitriou, and
  Roughgarden]{axiomaticPOW2019}
Xi~Chen, Christos Papadimitriou, and Tim Roughgarden.
\newblock An axiomatic approach to block rewards.
\newblock In \emph{Proceedings of the 1st ACM Conference on Advances in
  Financial Technologies}, AFT ’19, page 124–131, New York, NY, USA, 2019.
  Association for Computing Machinery.
\newblock ISBN 9781450367325.
\newblock \doi{10.1145/3318041.3355470}.
\newblock URL \url{https://doi.org/10.1145/3318041.3355470}.

\bibitem[Chiesa et~al.(2019{\natexlab{a}})Chiesa, Hu, Maller, Mishra, Vesely,
  and Ward]{cryptoeprint:2019:1047}
Alessandro Chiesa, Yuncong Hu, Mary Maller, Pratyush Mishra, Noah Vesely, and
  Nicholas Ward.
\newblock Marlin: Preprocessing zksnarks with universal and updatable srs.
\newblock Cryptology ePrint Archive, Report 2019/1047, 2019{\natexlab{a}}.
\newblock \url{https://eprint.iacr.org/2019/1047}.

\bibitem[Chiesa et~al.(2019{\natexlab{b}})Chiesa, Ojha, and
  Spooner]{cryptoeprint:2019:1076}
Alessandro Chiesa, Dev Ojha, and Nicholas Spooner.
\newblock Fractal: Post-quantum and transparent recursive proofs from
  holography.
\newblock Cryptology ePrint Archive, Report 2019/1076, 2019{\natexlab{b}}.
\newblock \url{https://eprint.iacr.org/2019/1076}.

\bibitem[Dwork and Naor(1992)]{dwork1992pricing}
Cynthia Dwork and Moni Naor.
\newblock Pricing via processing or combatting junk mail.
\newblock In \emph{Annual International Cryptology Conference}, pages 139--147.
  Springer, 1992.

\bibitem[Gabizon et~al.(2019)Gabizon, Williamson, and
  Ciobotaru]{cryptoeprint:2019:953}
Ariel Gabizon, Zachary~J. Williamson, and Oana Ciobotaru.
\newblock Plonk: Permutations over lagrange-bases for oecumenical
  noninteractive arguments of knowledge.
\newblock Cryptology ePrint Archive, Report 2019/953, 2019.
\newblock \url{https://eprint.iacr.org/2019/953}.

\bibitem[Groth(2016)]{groth2016size}
Jens Groth.
\newblock On the size of pairing-based non-interactive arguments.
\newblock In \emph{Annual International Conference on the Theory and
  Applications of Cryptographic Techniques}, pages 305--326. Springer, 2016.

\bibitem[Groth et~al.(2018)Groth, Kohlweiss, Maller, Meiklejohn, and
  Miers]{cryptoeprint:2018:280}
Jens Groth, Markulf Kohlweiss, Mary Maller, Sarah Meiklejohn, and Ian Miers.
\newblock Updatable and universal common reference strings with applications to
  zk-snarks.
\newblock Cryptology ePrint Archive, Report 2018/280, 2018.
\newblock \url{https://eprint.iacr.org/2018/280}.

\bibitem[Huberman et~al.(2019)Huberman, Leshno, and
  Moallemi]{HubermanLeshnoMoaellemi2019}
Gur Huberman, Jacob Leshno, and Ciamac~C. Moallemi.
\newblock {An economic analysis of the Bitcoin Payment System}.
\newblock Columbia Business School Research Papers 17-92, Columbia Business
  School, 2019.
\newblock URL \url{https://ideas.repec.org/p/cpr/ceprdp/13506.html}.

\bibitem[Kattis and Bonneau(2020)]{cryptoeprint:2020:190}
Assimakis Kattis and Joseph Bonneau.
\newblock Proof of necessary work: Succinct state verification with fairness
  guarantees.
\newblock Cryptology ePrint Archive, Report 2020/190, 2020.
\newblock \url{https://eprint.iacr.org/2020/190}.

\bibitem[Leshno and Strack(2019)]{LeshnoStrack2019}
Jacob Leshno and Philipp Strack.
\newblock {Bitcoin: An Impossibility Theorem for Proof-of-Work based
  Protocols}.
\newblock Cowles Foundation Discussion Papers 2204R, Cowles Foundation for
  Research in Economics, Yale University, October 2019.
\newblock URL \url{https://ideas.repec.org/p/cwl/cwldpp/2204r.html}.

\bibitem[Maller et~al.(2019)Maller, Bowe, Kohlweiss, and
  Meiklejohn]{cryptoeprint:2019:099}
Mary Maller, Sean Bowe, Markulf Kohlweiss, and Sarah Meiklejohn.
\newblock Sonic: Zero-knowledge snarks from linear-size universal and
  updateable structured reference strings.
\newblock Cryptology ePrint Archive, Report 2019/099, 2019.
\newblock \url{https://eprint.iacr.org/2019/099}.

\bibitem[Nakamoto(2008)]{Nakamoto2008}
Satoshi Nakamoto.
\newblock Bitcoin: A peer-to-peer electronic cash system, 2008.
\newblock \url{http://bitcoin.org/bitcoin.pdf}.

\bibitem[Parno et~al.(2013)Parno, Gentry, Howell, and
  Raykova]{cryptoeprint:2013:279}
Bryan Parno, Craig Gentry, Jon Howell, and Mariana Raykova.
\newblock Pinocchio: Nearly practical verifiable computation.
\newblock Cryptology ePrint Archive, Report 2013/279, 2013.
\newblock \url{https://eprint.iacr.org/2013/279}.

\bibitem[Sasson et~al.(2014)Sasson, Chiesa, Garman, Green, Miers, Tromer, and
  Virza]{sasson2014zerocash}
Eli~Ben Sasson, Alessandro Chiesa, Christina Garman, Matthew Green, Ian Miers,
  Eran Tromer, and Madars Virza.
\newblock Zerocash: Decentralized anonymous payments from bitcoin.
\newblock In \emph{2014 IEEE Symposium on Security and Privacy}, pages
  459--474. IEEE, 2014.

\bibitem[Thum(2018)]{Thum2018}
Marcel Thum.
\newblock {The Economic Cost of Bitcoin Mining}.
\newblock \emph{CESifo Forum}, 19\penalty0 (1):\penalty0 43--45, March 2018.
\newblock URL \url{https://ideas.repec.org/a/ces/ifofor/v19y2018i1p43-45.html}.

\bibitem[Wu et~al.(2018)Wu, Zheng, Chiesa, Popa, and Stoica]{wu2018dizk}
Howard Wu, Wenting Zheng, Alessandro Chiesa, Raluca~Ada Popa, and Ion Stoica.
\newblock $\{$DIZK$\}$: A distributed zero knowledge proof system.
\newblock In \emph{27th $\{$USENIX$\}$ Security Symposium ($\{$USENIX$\}$
  Security 18)}, pages 675--692, 2018.

\bibitem[Yu et~al.(2019)Yu, Sahraei, Li, Avestimehr, Kannan, and
  Viswanath]{cryptoeprint:2019:1139}
Mingchao Yu, Saeid Sahraei, Songze Li, Salman Avestimehr, Sreeram Kannan, and
  Pramod Viswanath.
\newblock Coded merkle tree: Solving data availability attacks in blockchains.
\newblock Cryptology ePrint Archive, Report 2019/1139, 2019.
\newblock \url{https://eprint.iacr.org/2019/1139}.

\end{thebibliography}

\appendix 
\newpage
\section{Appendix}\label{ap:proofs}

\subsection{Proof of Theorem 2}

\begin{proof}
Let $\psi\left(\rho\right)=\rho\int_{0}^{\bar{c}}\left(\bar{F}(c)-cf(c)\right)W\left(c,\rho\right)dc.$
First, the derivative of the wait time is equal to: 
\[
\frac{\partial W\left(\rho\right)}{\partial\rho}=e^{-\alpha(\rho)}\alpha(\rho)^{3}W(\rho)^{3}.
\]
 Therefore, the elasticity is given by: 
\begin{equation}\label{eq:elasticity_wait_times}
\frac{\rho}{W\left(\rho\right)}\frac{\partial W\left(\rho\right)}{\partial\rho}=\rho e^{-\alpha(\rho)}\alpha(\rho)^{3}W(\rho)^{2}.
\end{equation}

Evidently, this elasticity is increasing in $\rho.$ Thus, the elasticity
of average transaction fees is given by: 
\[
\frac{d\log\psi\left(\rho\right)}{d\log\rho}=1+\rho\int_{0}^{\bar{c}}\omega_{c}\bar{F}(c)e^{-\alpha(\rho\bar{F}(c))}\alpha(\rho\bar{F}(c))^{3}W(\rho\bar{F}(c))^{2}dc,
\]
 where $\omega_{c}\equiv\frac{\bar{F}(c)-cf(c)W\left(c,\rho\right)}{\int_{0}^{\bar{c}}\left(\bar{F}(c)-cf(c)\right)W\left(c,\rho\right)dc}.$
Given that $\alpha(\rho)\rightarrow\infty$ as $\rho\rightarrow0,$
a uniform bound on this elasticity is given by: 
\[
\frac{d\log\psi\left(\rho\right)}{d\log\rho}\leq1+\rho e^{-\alpha(\rho)}\alpha(\rho)^{3}W(\rho)^{2}\equiv \overline{\varepsilon}\left(\rho\right).
\]

\end{proof}

\end{document}